\documentclass[format=acmsmall, authorversion=true]{acmart}
\usepackage{acm-ec-20}
\usepackage{booktabs} 

\usepackage{amsmath}
\usepackage{amssymb}
\usepackage{eurosym}
\usepackage[shortlabels]{enumitem}

\settopmatter{printacmref=false, printccs=false, printfolios=false}
\AtBeginDocument{
\fancypagestyle{firstpagestyle}{
	\fancyhf{}%
	\fancyfoot[L]{\footnotesize{\emph{Unpublished draft manuscript. An extended abstract summarizing the results of the manuscript will appear at EC 2020.}}}%
}
\fancypagestyle{acmec}{
	\fancyhf{}%
	\fancyhead[C]{\footnotesize{\mbox{\shortauthors}}}
	\fancyhead[R]{\thepage}%
}
\pagestyle{acmec}
\renewcommand{\footnotetextcopyrightpermission}[1]{} 
\renewcommand{\footnotetextauthorsaddresses}[1]{} 
}

\newcommand{\playerone}{\textrm{P1}}
\newcommand{\playertwo}{\textrm{P2}}
\newcommand{\playeri}{\textrm{Pi}}
\newcommand{\playern}{\textrm{Pn}}
\newcommand{\playerj}{\textrm{Pj}}

\newcommand{\payoffkind}[1]{\ensuremath{\upsilon^{#1}}}
\newcommand{\lompayoffname}[0]{\ensuremath{\payoffkind{}}}
\newcommand{\lompayoffone}[2]{\ensuremath{\lompayoffname_1\left(#1,#2\right)}}
\newcommand{\lompayoffi}[2]{\ensuremath{\lompayoffname_i\left(#1,#2\right)}}

\newcommand{\insepsetA}[0]{\ensuremath{\mathcal{A}}}
\newcommand{\insepsetB}[0]{\ensuremath{\mathcal{B}}}

\newcommand{\halts}[1]{\ensuremath{#1 \!\! \downarrow}}

\DeclareMathOperator*{\argmax}{arg\,max}
\DeclareMathOperator*{\argmin}{arg\,min}


\setcitestyle{acmnumeric}

\title{A Complete Characterization of Infinitely Repeated Two-Player Games having Computable Strategies with no Computable Best Response under Limit-of-Means Payoff}

\author{Jakub Dargaj}
\affiliation{
\institution{University of Copenhagen}
\department{Computer Science}
}
\email{jada@di.ku.dk}
\author{Jakob Grue Simonsen}
\orcid{0000-0002-3488-9392}
\affiliation{
\institution{University of Copenhagen}
\department{Computer Science}
}
\email{simonsen@diku.dk}

\begin{abstract}
It is well-known that for infinitely repeated games, there are computable strategies that have
best responses, but no computable best responses. These results were originally proved for either specific
games (e.g., Prisoner's dilemma), or for classes of games satisfying certain conditions not known to be both
necessary and sufficient.

We derive a complete characterization in the form of simple necessary and sufficient conditions for the existence of a computable strategy without a computable best response under limit-of-means payoff. We further refine the characterization by requiring the strategy profiles to be Nash equilibria or subgame-perfect equilibria,
and we show how the characterizations entail that it is efficiently decidable whether an infinitely repeated
game has a computable strategy without a computable best response.
\end{abstract}

\begin{document}

\newtheorem{remark}[theorem]{Remark}

\maketitle

\section{Introduction}

We consider two-player games $G$ with simultaneous moves and perfect information. In a repeated game (or \emph{supergame}), $G$ is played repeatedly with all players aware of all moves played by all players in all previous games. The payoff of each player in such a game is a function of the payoffs obtained in the repetitions of $G$, for example the \emph{limit-of-means} payoff is the limit inferior of
the undiscounted averages of the payoff for each finite sequence of repetitions.
A \emph{computable} strategy for infinitely repeated games
is one where an algorithm computes the next action based on the finite history of previous repetitions of the game. 
Classic results from the 1990s show that infinitely repeated games admit computable strategies that have a best response, but no \emph{computable} best response \cite{bib:knoblauch1994,bib:Nachbar1996}, that is, some algorithm will play a strategy such
that there will exist a counterstrategy for the other player that will achieve maximum payoff among all strategies, but no such counterstrategy is computable. 
For infinitely repeated games with limit-of-means payoff, results are known solely for Prisoner's dilemma, and the
computable strategy involved is not known to be a Nash equilibrium \cite{bib:knoblauch1994}; for rational players, the absence of an equilibrium presents a problem: an algorithm might prevent other algorithms from obtaining maximal payoff, but possibly at the cost of not obtaining maximal payoff for itself. For infinitely repeated games with discounted payoff, results are known for a larger class of games containing Prisoner's dilemma that will ensure that the strategies involved form Nash or subgame-perfect equilibria \cite{bib:Nachbar1996}, but no necessary and sufficient conditions are known.

\emph{Computable} here means ``computable by a Turing machine''---the most general and widely accepted notion of what it means for a mathematical function to be computable \cite{Rogers,bib:sipser,Jones1997}. A Turing machine is an idealized notion of a computer that has a finite control (that is, a ``program''), but potentially limitless memory. Standard notions of restricted machines can typically be seen as Turing machines with restrictions on their running time or memory use (e.g., finite-state machines are Turing machines with constant memory). 
As a strategy in an infinitely
repeated game is map $s$ that, for any finite history (i.e., the finite sequence of previous actions played by both players in prior repetitions of $G$) outputs an action to be played in the next repetition, the strategy $s$ is computable if there exists a Turing machine that computes the map. Thus, a computable strategy $s$ that has a best response, but no \emph{computable} best response, is a strategy played by some (ordinary, finite) algorithm that when played against \emph{any} adversary that also plays according to some algorithm--that the adversary may choose freely--results in strictly suboptimal payoff for the adversary. However, as $s$ has a best response, an adversary with the ability to play a non-computable strategy--that is, a strategy that requires fundamentally more power to ``compute'' than what our current understanding of the term ``computer'' is able to--could, in principle, obtain optimal payoff.

\textbf{Contributions:}
For infinitely repeated games with limit of means payoff, we extend previous results in two directions: First, we identify  necessary and sufficient conditions for games to have computable strategies that have no computable best response, even though a best response exists; as a consequence of our techniques, we also provide necessary and sufficient conditions for strategies (computable or otherwise) to have no best response at all. Second, we obtain necessary and sufficient conditions for games to have such strategies in the case where the only strategies allowed are those that form Nash equilibrium, respectively a subgame-perfect equilibrium. 
In both cases, it is efficiently decidable whether a game satisfies the conditions

The general approach in our proof follows a standard technique in repeated games, namely using trigger strategies
that test for deviation from a prescribed path of play, entering a (finite or infinite) punishment phase ensured to decrease the opposing player's payoff and thus discouraging deviations from the prescribed path of play. All strategies use the notion of recursively inseparable sets, already utilized by Nachbar and Zame for discounted games
\cite{bib:Nachbar1996}. 

Both results \cite{bib:knoblauch1994,bib:Nachbar1996} make use of specific moves whose existence is guaranteed by the assumption that the game is (a variation of) the Prisoner's dilemma. The two key new insights are (i) that we can employ players' minmax payoff in $G$ in punishment phases to replace cooperation and defection from Prisoner's dilemma in almost all games, and (ii) that to establish an equilibrium, we can modify the strategies from the Folk theorems (standard results describing the set of equilibrium payoffs) by carefully incorporating the recursively inseparable sets in some repetitions of the game. Previously, strategies were either not required to form an equilibrium \cite{bib:knoblauch1994}, or used predefined moves (cooperation and defection) on the prescribed path \cite{bib:Nachbar1996}, while our strategies do not rely on these moves--instead, their existence is guaranteed by the Folk theorems.

The notion of minmax payoff plays an important role in the characterizations obtained. Consider a class of trivial games where a player cannot earn more than their minmax. Then it is rather apparent that a computable strategy played by any player has a computable best response in an infinitely repeated game.
Surprisingly, it turns out that any game, in order to have a computable strategy with no computable best response, just needs to allow a player to earn strictly more than their minmax; using different variations of strategies allow us to, essentially, use this criterion to also treat Nash and subgame-perfect equilibria.


\subsection{Related work}

Knoblauch proved, for limit-of-means payoff, that the Prisoner's dilemma
admitted computable strategies that have a best response, but no computable best response \cite{bib:knoblauch1994}, a result
later improved by Fortnow and Whang \cite{DBLP:conf/stoc/FortnowW94} showing that there is a polynomial-time computable strategy in Prisoner's dilemma that has no eventually $\epsilon$-optimal
computable response for any $\epsilon > 0$. 
Similarly, Nachbar and Zame show that for discounted payoff, there are computable strategies with best responses where no best response is computable for 
a class of two-player games that are paradoxical in the same way as the Prisoner's dilemma--rational players earn less than if they were both forced to make an irrational decision \cite{bib:Nachbar1996}. Unlike previous results for limit-of-means payoff, the strategies in \cite{bib:Nachbar1996} strategy are required to be subgame-perfect equilibria,
and the conditions for existence
of strategies without computable best responses are sufficient, but the authors conjecture that they are not \emph{necessary}. 

Both prior to, and after, the landmark results of Knoblauch and Nachbar and Zame, substantial work has been devoted to computing best responses (or Nash equilibria) for repeated games where strategies are constrained to be computable by machines with less power than the full Turing machines. Classic work includes Rubinstein \cite{RUBINSTEIN198683}, Gilboa \cite{GILBOA1988342}, Ben-Porath \cite{BENPORATH19901}, and Neyman and Okada \cite{DBLP:journals/ijgt/NeymanO00} (finite automata); Fortnow and Whang \cite{DBLP:conf/stoc/FortnowW94} (polynomial-time computable strategies). Modern results have mostly concerned variations
on the notion of equilibria or asymmetry between players, for example Chen et al.\ consider strategies with strictly bounded memories (a setting slightly different from strategies computable by finite automata) \cite{Chen-kmemory2017}, and Zuo and Tang
\cite{DBLP:conf/aaai/ZuoT15} study Stackelberg equilibria in a setting with restricted machines, and Chen et al.\ \cite{bib:Chen:2015} study changes to Nash equilibria of  infinitely repeated games under restrictions on the running time or space of the Turing machines. For games with discounted payoff, \cite{Berg2011} prove that all subgame-perfect equilibrium paths consist of elementary subpaths that can be represented as directed graphs.

Similar results concerning notions different from strategies that are known to exist classically, but fail to be computable exist elsewhere in Economics; for example, 
Richter and Wong show that there are exchange economies with all components computable and where a competitive equilibrium exists (by the Arrow-Debreu Theorem \cite{ArrowDebreu}), but no such equilibrium is computable \cite{RePEc:spr:joecth:v:14:y:1999:i:1:p:1-27}.


\section{Preliminaries}

We expect the reader to be familiar with basic notions from game theory and computability
theory at the level of introductory textbooks (e.g., \cite{bib:ft1991,Osborne1994,bib:Leyton-Brown:2008:EGT:1481632,bib:sipser}). To keep
the paper self-contained, we recap notation and some fundamental results in the following. Even though we are primarily interested in two-player games, we give definitions for games with any finite number of players in order to conform to standard notation.
We set $\mathbb{N} = \{1,2,\ldots\}$, $\mathbb{N}_0 = \{0\} \cup \mathbb{N}$, and we denote the set of rational numbers by
$\mathbb{Q}$ as usual.

\subsection{Game theory}

\begin{definition}[Normal-form game]
A (normal-form) game is a tuple \\ $(N, A, u)$ where:
\begin{enumerate}[(1)]
    \item $N = \{1,\ldots,n\}$ is the set of players (typically referred to as $\playerone, \ldots, \playern$).
    \item $A = A_1 \times \dots \times A_n$ is the set of \emph{action profiles}, where $A_i$ is a finite set of \emph{actions} available to \playeri{}.
    \item $u = (u_1, \dots, u_n)$, where $u_i: A \to \mathbb{Q}$ is the \emph{payoff} (aka.\ \emph{utility} or \emph{reward}) function for \playeri{}.
\end{enumerate}
\end{definition}

We shall mostly be interested in games $(N,A,u)$ with two players, that is, $N = \{\playerone{}, \playertwo{}\}$. For two-player games, the payoff function $u$ can be represented as a matrix with rows indexed by the actions available to \playerone{} ($A_1$), columns by $A_2$, and entries containing payoffs for each player when the corresponding action profile is played, separated by commas. This matrix is called a \emph{payoff matrix}.

We use the classic Prisoner's dilemma as a running example:
\begin{example}[Prisoner's dilemma]
\label{ex:prisonersdilemma}
Let $a, b, c, d \in \mathbb{R}$ satisfy $c > a > d > b$. Prisoner's dilemma is a~two-player game with $A_1 = A_2 = \{ C, D\}$ and the following payoff matrix:\\
\begin{center}
\begin{tabular}{ccc}
\textit{}                       & \textit{C}                         & \textit{D}                         \\ \cline{2-3} 
\multicolumn{1}{c|}{\textit{C}} & \multicolumn{1}{c|}{\textit{a, a}} & \multicolumn{1}{c|}{\textit{b, c}} \\ \cline{2-3} 
\multicolumn{1}{c|}{\textit{D}} & \multicolumn{1}{c|}{\textit{c, b}} & \multicolumn{1}{c|}{\textit{d, d}} \\ \cline{2-3} 
\end{tabular}
\end{center}
\end{example}
\begin{definition}[Pareto domination]
Let $G = (N, A, u)$ be a normal-form game.
Action profile $a$ is said to \emph{Pareto dominate} action profile $a'$ if
(i) For all $i \in N$, $u_i(a) \geq u_i(a')$, and (ii) there is $i \in N$ such that $u_i(a) > u_i(a')$.
If, for all $i \in N$, we have $u_i(a) > u_i(a')$, we say that $a$ \emph{strictly Pareto dominates} $a'$. 
\end{definition}
For example, in Prisoner's dilemma the action profile $(C,C)$ strictly Pareto dominates $(D,D)$. This follows from the initial assumption that for both players $i$, $u_i(C,C) = a > d = u_i(D,D)$. 

\begin{definition}
For an action profile $a = (a_1,\ldots,a_n)$ and \playeri, we denote by $a_{-i}$ the tuple
of actions of all other players, that is $(a_1,\ldots,a_{i-1},a_{i+1},\ldots,a_n)$.
\end{definition}

\begin{definition}[Best response; Nash equilibrium]
Let $G = (N,A,u)$ be a game, let $a = (a_1,\ldots,a_n) \in A$ be an action profile,
and let $a_i^* \in A_i$ be an action of \playeri{}.
We say that $a_i^*$ is a \textit{best response} to $a_{-i}$ if $u_i(a_i^*, a_{-i}) \geq u_i(a_i', a_{-i})$ for any other action $a_i' \in A_i$. We say that $a$ is a \textit{Nash equilibrium} of $G$ if, for all $i \in N$,
$a_i$ is a best response to $a_{-i}$.
\end{definition}






\subsection{Repeated games}

We now consider a situation when the same game is played infinitely many times; standard treatments of such games can be found in \cite{bib:aumann1981,bib:ft1991}, and we recapitulate basic terminology here.

\begin{definition}[Infinitely repeated game]
Given a game $G = (N, A, u)$, $G^\infty$ is a game which consists of infinitely many repetitions of the game $G$. $G$ is called the \emph{stage game} of the infinitely repeated game $G^\infty$.
\end{definition}
 Next we define a \emph{finite history} of length $T \in \mathbb{N}$ as a sequence of the first $T$ action profiles played in~$G^\infty$ and a \emph{path of play} as infinite sequence of action profiles.

\begin{definition}[Finite history]
Let $G^\infty$ be the infinitely repeated game of the stage game $G = (N, A, u)$. For a $T \in \mathbb{N}_0$, we shall write $\mathcal{H}_{G^\infty}^T = A \times \dots \times A = A^T$, and~$\mathcal{H}_{G^\infty} = \bigcup_{T \geq 0}{\mathcal{H}_{G^\infty}^T}$. A \emph{finite history of length} $T$ is any $h^T \in \mathcal{H}_{G^\infty}^T$.
\end{definition}


\begin{definition}[Path of play]
Let $G^\infty$ be the infinitely repeated game of the stage game $G = (N, A, u)$. We write $\mathcal{H}_{G^\infty}^\infty = A \times A \times \cdots = A^\infty$. A \emph{path of play} is any $h^\infty \in \mathcal{H}_{G^\infty}^\infty$.

For a finite history $h^T \in \mathcal{H}_{G^\infty}^T$ and $t \in \mathbb{N}$, $t \leq T$, we write $h^T_i[t]$ to denote the action played by Player $i$ in repetition $t$. Similarly, for a path of play $h^\infty$, $h^\infty_i[t]$ is the action played by \playeri{} in repetition $t$. We denote by $\bar{u}_i[t]=u_i(h^\infty[t])$ the payoff of \playeri{} in repetition $t$. 
\end{definition}


For example, let the stage game $G$ be Prisoner's dilemma from Example \ref{ex:prisonersdilemma} and consider the infinitely repeated game $G^\infty$. Since in every stage there are four action profiles available, there exist $4^T$ histories of length $T$. Assume that both players decide to play $C$ in odd stages and $D$ in even stages. This leads to the~path of play $h^\infty=\left((C,C), (D,D), (C,C), (D,D), \dots\right)$ and both players obtain the sequence of payoffs $(a,d,a,d,\dots)$. 

The payoff function for $G^\infty$ can be defined in multiple ways; in the present paper, we consider only
the \emph{limit-of-means} payoff (aka.\ \emph{average} payoff):
\begin{definition}[Limit-of-means payoff]
Given an infinite sequence of payoffs $(\bar{u}_i[1], \bar{u}_i[2], \dots)$ for \playeri{}, the \emph{limit-of-means payoff} of \playeri{} is defined as:
$$
\liminf_{T \to \infty}{\frac{1}{T}\sum_{t=1}^T{\bar{u}_i[t]}}.
$$
\end{definition}
Thus, if $G = (N,A,U)$, then any path of play of $G^\infty$ induces a limit-of-means payoff for each player. In games with limit-of-means payoff, the use of $\liminf$ ensures that any finite sequence of payoffs is ignored, so players seeking to maximize their payoff will only care about their behaviour in the infinite horizon. 

The action played by a player in the stage $t+1$ depends on the history of length $t$. All players have complete information about the actions played before, so a player's strategy maps finite histories into actions played in the next stage:
\begin{definition}[Strategy in a repeated game]
Let $G = (N,A,u)$ be a game.
A~\textit{(pure) strategy} for \playeri{} in $G^\infty$ is a map $s_i: \mathcal{H}_{G^\infty} \to A_i$.
A \emph{strategy profile} in $G^\infty$ is a tuple $s = (s_1, \dots, s_n)$
where, for each $i \in N$, $s_i$ is a strategy for \playeri{}.
\end{definition}
Observe that any strategy profile $s$ defines a unique path of play $h^\infty_s$, namely the one where
each player in stage $t \in \mathbb{N}$ of $G^\infty$ observes the finite history consisting of actions
played by all players in stages $1,\ldots,t-1$, and then use their strategy to play an action for stage $t$.
If $s = (s_1,\ldots,s_{i-1},s_i,s_{i+1},\ldots,s_n)$ is a strategy profile and $s'_i$ is a strategy for \playeri,
we write $(s'_i,s_{-i})$ for the strategy profile obtained by replacing $s_i$ by $s'_i$.

\begin{definition}[Payoff of a strategy profile]
Let $G = (N,A,u)$ be a game, let $s = (s_1,\ldots,s_n)$ be a strategy profile
in $G^\infty$, and let $h^\infty_s$ be the unique path of play induced by $s$. The (limit-of-means)
\emph{payoff} of \playeri{} is:
$$
\lompayoffname_i(s) = \liminf_{T \to \infty}{\frac{1}{T}\sum_{t=1}^{T}{u_i(h_s^\infty[t])}}
$$
\end{definition}

\begin{definition}[Best response; Nash equilibrium]
Let $G = (N,A,u)$ be a game, let $s = (s_1,\ldots,s_n)$ be a strategy profile in $G^\infty$,
and let $s_i^*$ be a strategy for \playeri{} in $G^\infty$.
We say that $s_i^*$ is a \textit{best response} to $s_{-i}$ if $\lompayoffi{s_i^*}{s_{-i}} \geq \upsilon_1(s_i', s_{-i})$ for any other strategy $s_i'$ for \playeri{}. We say that $s$ is a \textit{Nash equilibrium} if, for all $i \in N$,
$s_i$ is a best response to $s_{-i}$.
\end{definition}
For a two-player game and a strategy profile $s = (s_1,s_2)$ we abuse notation slightly by writing that
$s_1$ is a best response to $s_2$ instead of a best response to $s_{-1} = (s_2)$.  Observe that no player can \emph{unilaterally} choose an action (or strategy) that yields them a strictly better payoff than a Nash equilibrium--any strictly better payoff must involve other players changing strategies as well.

\begin{definition}[Subgame]
Let $G^\infty$ be an infinitely repeated game, $T \in \mathbb{N}$ and $h^T \in \mathcal{H}^T_{G^\infty}$. The subgame $(G^\infty, h^T)$ is the infinitely repeated game starting at~stage $T+1$ of $G^\infty$ with history $h^T$. 
\end{definition}
To illustrate the notion of a subgame, consider a~$G^\infty$ and a~strategy profile $s$ inducing the path of play $h^\infty_s$. If the history $h^T$ is a restriction of~$h^\infty_s$ to the first $T$ stages, then $s$ applied to the subgame $(G^\infty, h^T)$ leads to the path of~play $h^\infty_s[T+1,\dots]$, where $h^\infty_s[T+1,\dots]$ is the contiguous subsequence of $h^\infty_s$ starting at stage $T+1$. 
On the other hand, there may be histories containing actions that, according to $s$, are never played by any of the players. Every such history $h^{T'}$ defines a different subgame, and leads to a path of play that may have nothing in common with the original $h^\infty_s$.




\begin{definition}[Subgame-perfect equilibrium]
Let $G^\infty$ be an infinitely repeated game. A strategy profile $s$ is said to be a \emph{subgame-perfect equilibrium of} $G^\infty$ if it is a Nash equilibrium of every subgame.
\end{definition}



\subsection{Computability theory}

As usual, for any $A \subseteq \mathbb{N}$
we say that $A$ is \emph{recursively enumerable}
if there is a Turing machine that halts exactly on the elements of $A$ (equivalently, outputs exactly the elements of $A$),
and that $A$ is \emph{decidable} if there exists a Turing machine
 that halts on all inputs and accepts on input $n$ if{f} $n \in A$.

\begin{definition}
We assume a standard G{\"o}del numbering of the Turing machines
and denote by $T_m$  the $m$th Turing machine in this numbering, and by $\phi_m : \mathbb{N} \rightharpoonup \mathbb{N}$ the partial function computed by $T_m$. If $n \in \mathbb{N}$,
we write $\halts{\phi_m(n)}$ if $T_m$ halts
on input $n \in \mathbb{N}$. 
The \emph{jump} is the set
$\emptyset' = \{n \in \mathbb{N} : \halts{\phi_n(n)}\}$.
\end{definition}
The jump $\emptyset'$ is known to be recursively enumerable
and undecidable \cite[{\S}13.1]{Rogers}.
We shall use Smullyan's notion of recursive inseparability  \cite{bib:Smullyan1958}:

\begin{definition}
Let $\Sigma$ be a non-empty alphabet.
Sets $A, B \subseteq \Sigma^*$ are said to be \emph{recursively inseparable} if $A \cap B = \emptyset$ and there is no decidable set $C \subseteq \Sigma^*$ such that $A \subseteq C$ and $B \subseteq \Sigma^* \setminus C$.
\end{definition}
Observe that if $A$ is not decidable then $A$ and its complement are recursively inseparable. We use two standard sets
known to be recursively inseparable:
\begin{definition}
Define $\insepsetA = \{n \in \mathbb{N}: \halts{\phi_n(n)} \land \phi_n(n) = 0\}$, and 
$\insepsetB = \{ n \in \mathbb{N}: \halts{\phi_n(n)} \land T_n(n) \neq 0\}$.
\end{definition}
The following is well-known and provable by standard methods
(see, e.g.\ \cite{bib:Nachbar1996}):
\begin{proposition}
\label{prop:sets_AB}
Sets $\insepsetA$, $\insepsetB$ and $\insepsetA \cup \insepsetB$ are (i) recursively enumerable, (ii) undecidable, and (iii) recursively inseparable.
\end{proposition}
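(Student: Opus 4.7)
The plan is to prove (i) and (ii) by straightforward reductions and then obtain (iii) via a single diagonal argument against an alleged separator; (iii) will in fact re-derive the undecidability of $\insepsetA$ and $\insepsetB$ individually, so I would present it last and fold the redundant parts of (ii) into it.

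For (i), I would first observe that $\insepsetA \cup \insepsetB = \{n : \halts{\phi_n(n)}\}$, which is recursively enumerable because a universal machine can dovetail the simulations of $T_n(n)$ for all $n$. For $\insepsetA$ (respectively $\insepsetB$), the semi-decision procedure on input $n$ simulates $T_n(n)$ step by step and accepts precisely when the simulation halts with output $0$ (respectively with output different from $0$).

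For (ii), $\insepsetA \cup \insepsetB$ coincides with the diagonal halting set $\emptyset'$ and is therefore undecidable. For $\insepsetA$ alone, I would reduce the halting problem: given $(m,n)$, apply the s-m-n theorem to compute uniformly an index $k$ such that $T_k$ on every input first simulates $T_m(n)$ and, upon termination of that simulation, outputs $0$; then $\halts{\phi_m(n)}$ iff $k \in \insepsetA$, so decidability of $\insepsetA$ would decide halting. The reduction for $\insepsetB$ is identical except that $T_k$ outputs $1$ instead of $0$.

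The main obstacle is (iii), and the key idea is the self-referential construction. Suppose for contradiction a decidable set $C \subseteq \mathbb{N}$ satisfies $\insepsetA \subseteq C$ and $\insepsetB \subseteq \mathbb{N} \setminus C$, and let $T_d$ be a Turing machine that halts on every input and accepts exactly the elements of $C$. Define a Turing machine $M$ that on input $n$ simulates $T_d(n)$ and then outputs $1$ if $n \in C$ and $0$ otherwise; since $M$ is constructed effectively from $T_d$, it has some G\"odel index $k$, and $\phi_k$ is the total function just described. Evaluating $\phi_k$ at $k$ itself, if $k \in C$ then $\phi_k(k) = 1 \neq 0$, so $k \in \insepsetB \subseteq \mathbb{N}\setminus C$, a contradiction; if $k \notin C$ then $\phi_k(k) = 0$, so $k \in \insepsetA \subseteq C$, again a contradiction. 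Hence no decidable separator exists, which also yields undecidability of $\insepsetA$ and $\insepsetB$ individually, since a decision procedure for either would trivially separate them from their complements, and these complements contain the other set by disjointness.
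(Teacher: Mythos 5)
Your proposal is correct and follows essentially the same route as the paper: the heart of both arguments is the diagonalization in (iii), where the index $k$ of a total $\{0,1\}$-valued machine respecting the $\insepsetA$/$\insepsetB$ dichotomy is applied to itself, and both then observe that undecidability of $\insepsetA$ and $\insepsetB$ follows because either set would itself be a decidable separator. The only differences are cosmetic: you diagonalize directly against the alleged separator $C$ rather than via the paper's intermediate ``no computable function $g$ with $g=1$ on $\insepsetA$ and $g=0$ on $\insepsetB$'' formulation, and your extra s-m-n reductions for (ii) are, as you note yourself, redundant.
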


\begin{definition}
\label{def:AnBn}
For $n \in \mathbb{N}$, define:
\begin{align*}
\insepsetA_n &= \{ i \in \mathbb{N}: (i \leq n) \land  (T_i \mbox{ halts in at most } n-i \mbox{ steps on input } i)  \land (\phi_{i}(i) = 0)\} \\
\insepsetB_n &= \{ i \in \mathbb{N}: (i \leq n) \land (T_i \mbox{ halts in at most } n-i \mbox{ steps on input } i) \land (\phi_i(i) \neq 0) \} \qed
\end{align*}
\end{definition}
\begin{remark}\label{rem:setsABthenAnBn}
Observe that $\insepsetA_1 \subseteq \insepsetA_2 \subseteq \insepsetA_3 \subseteq \dots \subseteq \insepsetA$ and $\insepsetB_1 \subseteq \insepsetB_2 \subseteq \insepsetB_3 \subseteq \dots \subseteq \insepsetB$. Clearly, $\insepsetA_n$ and $\insepsetB_n$ are finite for all $n \in \mathbb{N}$ and hence decidable (even stronger: there exists a Turing machine that on input $n$ will
output (the G{\"o}del number of) a Turing machine deciding $\insepsetA_n$ because a universal Turing machine
can simulate at most $n$ steps of $T_i$ on input $i$; similarly for $\insepsetB_n$).
Observe also that for $n \in \insepsetA$, there is some $k \in \mathbb{N}$ such that
$T_n$ halts in $k$ steps on input $n$, whence $n \in \insepsetA_{n+k}$.
\end{remark}

\begin{definition}
A pure strategy $s_i: \mathcal{H}_{G^\infty} \to A_i$ for \playeri{} is \emph{computable}
if there is a Turing machine that, on input a finite history $h \in \mathcal{H}_{G^\infty}$ 
(represented by some element of $\{0,1\}^*$) halts with output $s_i(h)$ (represented by some element of $\{0,1\}^*$).
\end{definition}

%
%




\section{Non-trivial games and best responses}

Consider a 2-player normal-form game $G$ and its infinite repetition $G^\infty$ with limit-of-means payoff. 

\begin{definition}
Let $G$ be a 2-player normal-form game and $a_{-i}$ be an action available to Player $-i$. We define
$M_i(a_{-i}) = \max_{a_{i} \in A_{i}}{u_i(a_i, a_{-i})}$,
and
$M_i = \max_{a \in A}{u_i(a)}$. \qed
\end{definition}

Suppose that the payoff of the best response of \playerone{} is independent of the action played by \playertwo{}, that is $\forall a_{2} \in A_{2}: M_1 = M_1(a_{2})$. This is equivalent to saying that no action gives Player $1$ higher payoff than their minmax payoff, and we will call such games trivial for \playerone.

\begin{definition}
Let $G$ be a 2-player normal-form game. Then, $G$ is said to be \emph{trivial for} \playeri{} if
$M_i = \min\limits_{a_{-i} \in A_{-i}}\max\limits_{a_{i} \in A_{i}} u_i(a_i, a_{-i})$. $G$ is said to be
\emph{non-trivial for} \playeri{} if it is not trivial for \playeri{}.
\qed
\end{definition}

For example, Prisoner's Dilemma  is non-trivial for any player; an example of a game that is trivial for any player
is Rock-Paper-Scissors (see Example \ref{ex:rockpaperscissors}). 

If a game is trivial for a player, that player will always have a best response to any strategy; moreover, the best response to a strategy requires no more computational resources than the original strategy, as it needs only scan the correct row (or column) of the payoff matrix and play the action maximising their profit in the current stage:

\begin{lemma}
\label{lemma:avg_trivial_br}
Let $G$ be trivial for \playerone{}. Then, under limit-of-means payoff:
\begin{enumerate}
    \item Every strategy of \playertwo{} has a best response.
    \item Every computable strategy of \playertwo{} has a computable best response.
\end{enumerate}
\end{lemma}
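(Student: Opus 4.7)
The plan is to exhibit an explicit best response for \playerone{} that attains the stage-game maximum $M_1$ at every repetition. The first step would be to unpack the triviality hypothesis: $G$ trivial for \playerone{} means $M_1 = \min_{a_2 \in A_2} \max_{a_1 \in A_1} u_1(a_1, a_2)$, and since $\max_{a_1} u_1(a_1, a_2) \leq M_1$ for every $a_2$, this forces $\max_{a_1 \in A_1} u_1(a_1, a_2) = M_1$ for \emph{every} $a_2 \in A_2$. In words, no matter what \playertwo{} plays in a given stage, \playerone{} has a reply that secures payoff $M_1$ in that stage.

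Given any strategy $s_2$ for \playertwo{}, define $s_1^* : \mathcal{H}_{G^\infty} \to A_1$ pointwise by
\[
s_1^*(h) \;=\; \argmax_{a_1 \in A_1} u_1\bigl(a_1,\, s_2(h)\bigr),
\]
with a fixed tie-breaking rule over a canonical enumeration of the finite set $A_1$. By the triviality observation, $u_1(s_1^*(h), s_2(h)) = M_1$ for every $h$, so along the path of play induced by $(s_1^*, s_2)$ every stage payoff for \playerone{} equals $M_1$, giving $\lompayoffone{s_1^*}{s_2} = M_1$. Conversely, $u_1 \leq M_1$ pointwise on $A$, so any strategy $s_1'$ for \playerone{} yields stage payoffs bounded above by $M_1$ and hence $\lompayoffone{s_1'}{s_2} \leq M_1$. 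Therefore $s_1^*$ is a best response, establishing~(1).

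For~(2), the same construction is already computable whenever $s_2$ is: a Turing machine for $s_1^*$ on input history $h$ simulates the machine for $s_2$ on $h$ to obtain $a_2 = s_2(h)$, then scans the finite payoff matrix and returns the tie-broken $\argmax_{a_1 \in A_1} u_1(a_1, a_2)$. Both phases halt, so $s_1^*$ is computable. The only substantive step in the whole argument is verifying that the stated triviality condition is equivalent to ``$M_1(a_2) = M_1$ for all $a_2$''; the remainder is routine bookkeeping on the limit-inferior of stage averages and on Turing-machine composition.
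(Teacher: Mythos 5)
Your proof is correct and follows essentially the same approach as the paper: define \playerone{}'s strategy to simulate $s_2$ on the current history and play a myopic $\argmax$, observe that triviality forces $M_1(a_2)=M_1$ for every $a_2$ so this achieves the global maximum $M_1$ in every stage, and note that the construction is computable whenever $s_2$ is. Your explicit unpacking of why triviality implies $\max_{a_1}u_1(a_1,a_2)=M_1$ for all $a_2$ is a welcome bit of extra care that the paper leaves implicit.
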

\begin{proof}
Let $s_2$ be any strategy of \playertwo{}. Define $s_1$ to be the strategy of \playerone{} that, given a finite history $h^T \in \mathcal{H}^T_{G^\infty},$ in stage $T+1$ computes $a_2 = s_2(h^T)$ and plays $a_1 = \argmax\limits_{a'_1 \in A_1}{u_1(a'_1,a_2)}.$ Because $G$ is trivial for \playerone{}, $u_1(a_1, a_2) = M_1.$ \playerone{}'s limit-of-means payoff when playing $s_1$ is: 
$$
\lompayoffone{s_1}{s_2}
 = \liminf_{T \to \infty}{\frac{1}{T}\sum_{i=1}^T{u_1(h_1^\infty[i], h_2^\infty[i])}} = M_1.
$$
Because $M_1$ is the maximum payoff \playerone{} can obtain in $G$, it is also the maximum limit-of-means payoff \playerone{} can obtain in $G^\infty$, and hence $s_1$ is a best response to $s_2$. If $s_2$ is computable, then $s_1$ clearly computable as the set of available actions $A_1$ is finite.  
\end{proof}

Hence, non-triviality is a \emph{necessary} condition for the existence of strategies without a best response, and of computable strategies without a computable best response. It turns out that it is also a \emph{sufficient} condition. 

\begin{definition}
Let G be non-trivial for Player $1$. Define $C_1, D_1 \in A_1$ and $C_2, D_2 \in A_2$ to be any actions satisfying (1) $u_1(C_1, C_2) = M_1$, and (2) $u_1(D_1, D_2) = M_1(D_2) < M_1$.\qed
\end{definition}

The action profile $(C_1, C_2)$ gives \playerone{} the maximum possible payoff in $G$. The existence of $(D_1, D_2)$, where $D_1$ is a best response to $D_2$, but \playerone{} obtains a lower payoff than from $(C_1, C_2),$ is guaranteed by non-triviality of $G$. We intentionally use the same notation as for Prisoner's dilemma to differentiate between the high-payoff and low-payoff action profiles, so that the strategies defined in this section are reminiscent of the strategies from \cite{bib:knoblauch1994}. However, we do not--at the moment--require $(D_1, D_2)$ to be a Nash equilibrium of $G.$ 

\subsection{Every non-trivial game has a strategy having no best response}

We now define a computable strategy that does not admit a best response (computable or otherwise).

\begin{definition}\label{def:the_d_strategy}
Let $G$ be non-trivial for \playerone{}. Define $\sigma^d_{2}$ to be \playertwo{}'s strategy in $G^\infty$ that, given a finite history $h^T \in \mathcal{H}^T_{G^\infty},$ plays the following action in stage $T+1:$
\begin{enumerate}[(1)]
    \item Play $D_2$ if Player $1$ has never played $C_1$ in $h^T$.
    \item If Player $1$ has played $C_1$ in $h^T,$ let $t$ be the first stage when Player $1$ plays $C_1$. If $(t+1)$ divides $(T+1)$, play $D_2$, otherwise play $C_2$.
\end{enumerate}
\end{definition}

\begin{lemma}\label{lem:non_trivial_no_best_response}
$\sigma^d_2$ is a computable strategy.
If $G$ is non-trivial for \playerone{}, then $\sigma^d_2$ has no best response.
\end{lemma}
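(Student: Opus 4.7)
The plan is to verify computability by direct inspection, then to upper-bound the limit-of-means payoff that any strategy $s_1$ of \playerone{} achieves against $\sigma^d_2$ by a quantity $v(t)$ that is strictly less than $M_1$ for every finite $t$ but satisfies $v(t) \to M_1$, and finally to use this to exhibit a strictly better deviation from any candidate best response.

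Computability is immediate: on input $h^T$, a Turing machine scans $h^T$ once to locate the least stage $t$ (if any) at which \playerone{} played $C_1$, and then tests whether $(t+1) \mid (T+1)$, both elementary operations. For the payoff bound, I would fix any $s_1$ and trace the induced path. Either \playerone{} never plays $C_1$, in which case rule~(1) forces \playertwo{} to play $D_2$ forever, so $\lompayoffone{s_1}{\sigma^d_2} \leq M_1(D_2)$; or \playerone{} first plays $C_1$ at some stage $t \in \mathbb{N}$, and rule~(2) makes \playertwo{} play $D_2$ precisely on $\{1,\ldots,t\} \cup \{k(t+1) : k \geq 1\}$ and $C_2$ elsewhere. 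A direct count shows that of the first $N$ stages, exactly $t + \lfloor N/(t+1) \rfloor$ are $D_2$-stages. Bounding per-stage payoffs by $M_1(D_2)$ on $D_2$-stages and by $M_1$ on $C_2$-stages, and dividing by $N$, the partial averages have $\limsup$ at most $v(t) := \frac{t}{t+1} M_1 + \frac{1}{t+1} M_1(D_2) < M_1$, so $\lompayoffone{s_1}{\sigma^d_2} \leq v(t)$.

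To construct a strictly better response for any candidate $s_1$ with payoff $v^{\circ}$, I would first note that $v$ is strictly increasing in $t$ and $v(t) \to M_1$, so $v^{\circ} < M_1$ implies there exists $t^{*}$ with $v(t^{*}) > v^{\circ}$. I would then define $s_1^{*}$ to (i) play a fixed action $a_1' \neq C_1$ in stages $1, \ldots, t^{*}-1$, (ii) play $C_1$ at stage $t^{*}$, and (iii) at every later stage $k$ compute $\sigma^d_2(h^{k-1})$ and play some element of $\argmax_{a_1 \in A_1} u_1(a_1, \sigma^d_2(h^{k-1}))$; this is possible because $\sigma^d_2$ is computable and $A_1$ is finite. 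The induced path has $t^{*}$ as the first $C_1$-stage and attains the per-stage maximum at every stage after $t^{*}$, so the partial averages now converge exactly to $v(t^{*})$, giving $\lompayoffone{s_1^{*}}{\sigma^d_2} = v(t^{*}) > v^{\circ}$ and contradicting optimality of $s_1$.

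The main obstacle will be the careful $\liminf$ bookkeeping: I must verify both that the $O(t^{*})$ stages preceding stage $t^{*}$ contribute only $O(t^{*}/N) \to 0$ to the partial averages (so that the upper bound is valid for arbitrary $s_1$ and the lower bound is sharp for $s_1^{*}$), and that the integer floor in the count of $D_2$-stages introduces only $O(1/N)$ error. A secondary subtlety is the existence of an action $a_1' \neq C_1$: in the degenerate case $A_1 = \{C_1\}$ the strategy space of \playerone{} is a singleton and $\sigma^d_2$ trivially has a best response, so this case must be excluded by an implicit assumption that \playerone{} has at least one action other than $C_1$, which is natural in any setting where non-triviality of $G$ is the substantive hypothesis.
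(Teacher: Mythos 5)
Your proposal is correct and follows essentially the same route as the paper: bound the payoff of any $s_1$ against $\sigma^d_2$ by a quantity determined by the first stage $t$ at which $C_1$ is played (or by $M_1(D_2)$ if it never is), then beat it by delaying the first $C_1$-stage---your unified choice of $t^{*}$ with $v(t^{*}) > v^{\circ}$ merely packages the paper's two explicit case-by-case deviations ($t+1$ in one case, alternation starting at stage $1$ in the other) into a single formula. Your remark about the degenerate case $A_1 = \{C_1\}$ is apt rather than a defect: the paper's own proof silently requires an action other than $C_1$ as well (its improving strategy ``plays any other action in the first $t$ stages''), so you are not introducing an assumption the paper's argument avoids.
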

\begin{proof}
$\sigma^d_2$ is clearly computable: A Turing machine can scan the finite history $h^T$ to
find whether \playerone{} has played $C_1$ at any stage. If so, the first such stage $t$ can be found in finite time, and it is clearly decidable whether $t+1$ divides $T+1$.

Now, let $s_1$ be any strategy for \playerone{},
and let $h^\infty$ be the path of play induced by the strategy profile $s=(s_1, \sigma^d_{2})$, and let $h^\infty_i[T]$ be the action played by Player $i$ in stage $T$.
Split on cases as follows:
\begin{itemize}
    \item \playerone{} plays $C_1$ in at least one stage of $G^\infty$. Let $t$ be the first stage where \playerone{} does so. Then, \playerone{}'s payoff is:
     $$
    \lompayoffone{s_1}{\sigma^d_{2}} 
    = \liminf_{T \to \infty}{\frac{1}{T}\sum_{i=1}^T{u_1(h^\infty_1[i], h^\infty_2[i])}}  
    = \liminf_{T \to \infty}{\frac{1}{T}\sum_{i=t+1}^T{u_1(h^\infty_1[i], h^\infty_2[i])}} 
    \leq
    \frac{M_1(D_2) + t M_1}{t+1}    
    $$
 where the final inequality follows from the fact that \playerone{}'s maximum payoff in any stage where \playertwo{} plays
    $D_2$ is $M_1(D_2)$ which happens with frequency $1/(t+1)$ at each stage after $T$; similarly, \playertwo{} plays $C_1$
    with frequency $t/(t+1)$ after stage $T$ (every stage where $t+1$ does not divide $T+1$), and in
    each stage where \playertwo{} plays $C_1$, \playerone{}'s payoff is at most $M_1$.
    
    Let $s_1'$ be the strategy for \playerone{} that plays $C_1$ for the first time in stage $t+1$ (and plays any other action in the first $t$ stages); for $T > t$, in stage $T+1$, \playerone{} plays $D_1$ if $t+2$ divides $T+1$, and otherwise plays $C_1$. Then, by the same reasoning as above:
    $$ 
    \lompayoffone{s_1'}{\sigma^d_{2}} = 
    \frac{M_1(D_2) + (t+1) M_1}{t+2} > \lompayoffone{s_1}{\sigma^d_{2}}
    $$
    Thus, the strategy $s_1$ is not a best response to $\sigma^d_{2}$.
    
    \item \playerone{} does not play $C_1$ in any stage of $G^\infty$. Then, by the definition
    of $\sigma_2^d$, \playerone{}'s payoff is:
    $$ 
    \lompayoffone{s_1}{\sigma^d_2} \leq M_1(D_2)
    $$
    Consider the strategy $s_1'$ for \playerone{} that plays $C_1$ in odd-numbered stages and $D_1$ in even-numbered stages. The strategy profile $(s_1', \sigma^d_2)$  has path of play 
    $$
    h^\infty = ((C_1,D_2), (D_1,D_2), (C_1,C_2), (D_1,D_2), (C_1,C_2), \dots)
    $$ 
    and \playerone{}'s payoff is thus:
    \begin{align*}
    \lompayoffone{s_1'}{\sigma^d_2} 
    &= \liminf_{T \to \infty}{\frac{1}{T}\sum_{i=1}^T{u_1(h^\infty_1[i], h^\infty_2[i])}}  
    = \liminf_{T \to \infty}{\frac{1}{T}\sum_{i=2}^T{u_1(h^\infty_1[i], h^\infty_2[i])}} \\ 
    &= \frac{1}{2}(M_1(D_2) + M_1) 
    > M_1(D_2)
    \geq \lompayoffone{s_1}{\sigma^d_2}    
    \end{align*}
    and thus $s_1$ is not a best response to $\sigma^d_2$.
\end{itemize}
Thus, for every choice of strategy for \playerone{}, there exists another strategy obtaining
better payoff against $\sigma_2^d$, and we conclude that no best response to $\sigma^d_2$ exists.
\end{proof}

\subsection{Every non-trivial game has a strategy having a best response, but no computable best response}

We now present a computable strategy that has a best response, but no \emph{computable} best response. The game is split into periods consisting of one test stage and $K_r$ reward stages for some large enough integer $K_r$.

\begin{definition}\label{def:playertwo_good_strategy}
Let $G$ be non-trivial for \playerone{}, and let $K_r$ be the least integer satisfying 
$$\frac{1}{K_r+1}(u_1(D_1, C_2) + K_r M_1) > u_1(D_1, D_2).$$ Define $\sigma^e_{2}$ to be the strategy for \playertwo{} that, given a~finite history $h^T \in \mathcal{H}^T_{G^\infty}$, plays the following action in stage $T+1$:
\begin{enumerate}[(i)]
    \item If, for any $t$ satisfying $0 < K_r t < T$, either
    \begin{align}
    \label{eq:sigmae2cond1}
        h^T_1[K_rt+1] \neq C_1\ \&\ (t \in \insepsetA_T),
    \end{align} or 
    \begin{align}
    \label{eq:sigmae2cond2}
        h^T_1[K_rt+1] \neq D_1\ \&\ (t \in \insepsetB_T),
    \end{align} play $D_2$. 
    \item Otherwise, play $C_2$.\qed
\end{enumerate}
\end{definition}

We first prove that the strategy  $\sigma^e_{2}$ has a best response: 
\begin{definition}
\label{def:sigmae1}
Define $\sigma^e_{1}$ to be the strategy for \playerone{} that, given a finite history $h^T \in \mathcal{H}^T_{G^\infty}$, plays the following action in stage $T+1:$
(i)   If there exists some $t \in \insepsetB$ such that $T=K_r t$,
    then play $D_1$; (ii) otherwise, play $C_1$.
\end{definition}

\begin{lemma}
\label{lemma:inconsistent_avg_comp}
$\sigma^e_2$ is a computable strategy, and 
if $G$ is non-trivial for \playerone{}, then $\sigma^e_2$ has a best response, but no best response to  $\sigma^e_2$ is a computable strategy.
\end{lemma}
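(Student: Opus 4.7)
The plan is to verify computability of $\sigma^e_2$, exhibit $\sigma^e_1$ as a best response, and then run a diagonal argument to rule out any computable best response.

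Computability of $\sigma^e_2$ is routine: on input a history $h^T$, iterate over the finitely many $t$ with $0 < K_r t < T$, use Remark \ref{rem:setsABthenAnBn} to uniformly decide membership in $\insepsetA_T$ and $\insepsetB_T$, and check conditions \eqref{eq:sigmae2cond1} and \eqref{eq:sigmae2cond2}. To show $\sigma^e_1$ is a best response, I would first verify by case analysis on the test index $t$ that $\sigma^e_1$ never triggers either condition against $\sigma^e_2$: when $t \in \insepsetA$ or $t \notin \insepsetA \cup \insepsetB$ it plays $C_1$, when $t \in \insepsetB$ it plays $D_1$, and disjointness of $\insepsetA$ and $\insepsetB$ rules out the opposite condition firing. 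Hence $\sigma^e_2$ plays $C_2$ throughout $(\sigma^e_1, \sigma^e_2)$, so \playerone{}'s per-stage payoff equals $M_1 = u_1(C_1, C_2)$ everywhere except at the density-$1/K_r$ subset of test stages $K_r t + 1$ with $t \in \insepsetB$, where it equals $u_1(D_1, C_2)$. A direct $\liminf$ estimate then gives $\lompayoffone{\sigma^e_1}{\sigma^e_2} \geq \frac{K_r M_1 + u_1(D_1, C_2)}{K_r + 1} > u_1(D_1, D_2) = M_1(D_2)$ by the choice of $K_r$.

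For an arbitrary $s_1$, split on whether it ever triggers a condition of $\sigma^e_2$. If not, then $\sigma^e_2$ still plays $C_2$ throughout and the no-trigger constraint forces $s_1$ to play $D_1$ at every $\insepsetB$-test stage and $C_1$ at every $\insepsetA$-test stage---the same actions as $\sigma^e_1$---so the per-stage payoff is capped by $M_1$ everywhere, giving $\lompayoffone{s_1}{\sigma^e_2} \leq \lompayoffone{\sigma^e_1}{\sigma^e_2}$. If $s_1$ does trigger some condition at stage $K_r t_0 + 1$ with $t_0 \in \insepsetA \cup \insepsetB$, then because $\insepsetA_T$ and $\insepsetB_T$ are monotone in $T$ with $\bigcup_T \insepsetA_T = \insepsetA$ and $\bigcup_T \insepsetB_T = \insepsetB$ (Remark \ref{rem:setsABthenAnBn}), the trigger is detected at some finite $T_0$, so $\sigma^e_2$ plays $D_2$ forever after $T_0$; the tail per-stage payoff is then capped at $M_1(D_2)$, forcing $\lompayoffone{s_1}{\sigma^e_2} \leq M_1(D_2) < \lompayoffone{\sigma^e_1}{\sigma^e_2}$. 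Thus $\sigma^e_1$ is a best response.

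For the final step, suppose $s_1^*$ is a computable best response to $\sigma^e_2$. By the optimality argument just given it must pass every test in the play $(s_1^*, \sigma^e_2)$: at stage $K_r t + 1$ it plays $C_1$ whenever $t \in \insepsetA$ and $D_1$ whenever $t \in \insepsetB$. Define $C = \{t \in \mathbb{N} : s_1^* \text{ plays } D_1 \text{ at stage } K_r t + 1 \text{ in the joint play with } \sigma^e_2\}$. Since both strategies are computable, one can simulate the play for $K_r t + 1$ stages to decide membership in $C$, so $C$ is decidable; but $\insepsetB \subseteq C$ and $\insepsetA \cap C = \emptyset$, so $\mathbb{N} \setminus C$ is a decidable separator of $\insepsetA$ and $\insepsetB$, contradicting Proposition \ref{prop:sets_AB}. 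The main obstacle is the middle step: carefully checking that a single detected deviation pins the long-run limit-inf to $M_1(D_2)$ regardless of initial play, and that the resulting gap exceeds any payoff $\sigma^e_1$ might forgo at $\insepsetB$-test stages, uniformly in the density of $\insepsetB$; the computability verification and the diagonal contradiction are essentially mechanical once the trigger structure of $\sigma^e_2$ is in place.
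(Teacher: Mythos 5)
Your proposal is correct and follows essentially the same route as the paper's proof: verify computability via Remark \ref{rem:setsABthenAnBn}, show that $\sigma^e_1$ never triggers either condition so that $\sigma^e_2$ plays $C_2$ forever, split an arbitrary $s_1$ on whether it ever triggers the punishment (with a detected trigger pinning the tail payoff to at most $M_1(D_2)$), and diagonalize by simulating the joint play at stages $K_r t+1$ to build a decidable separator of $\insepsetA$ and $\insepsetB$. The only cosmetic differences are that you establish optimality among non-triggering strategies by stage-wise payoff domination where the paper argues by contradiction at a single stage, and that you inherit the paper's own $\frac{1}{K_r+1}$ estimate for the contribution of the test stages.
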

\begin{proof}
By Remark \ref{rem:setsABthenAnBn}, there is a Turing machine
that, on input $T$ will output the G{\"o}del number, $k$, of a Turing machine deciding
$\insepsetA_T$ (and similarly for $\insepsetB_T$); by using a universal Turing machine to simulate $T_k$, 
it is clearly decidable whether, for any $t$ such that $0 < K_rt < T$,
we have $t \in \insepsetA_T$, respectively $t \in \insepsetB_T$; and clearly, it is directly checkable
by a simple lookup in the history $h^T$, whether   $h^T_1[K_rt+1] \neq C_1
$, respectively $h^T_1[K_rt+1] \neq D_1$. Hence, $\sigma^e_2$ is a computable strategy. 

Let $h^\infty$ be the path of play induced by the strategy profile $s=(\sigma^e_1, \sigma^e_2)$, and let $h^\infty_i[T]$ be the action played by \playeri{} in stage $T$. \playertwo{} starts by playing $C_2$, and plays $D_2$ only if condition (\ref{eq:sigmae2cond1}) or (\ref{eq:sigmae2cond2}) in Definition \ref{def:playertwo_good_strategy} is satisfied for some $t, T \in \mathbb{N}$. Condition (\ref{eq:sigmae2cond1}) implies $t \in \insepsetA$, in which case $h_1^T[K_rt+1] = C_1$ by the definition of $\sigma^e_1$. Thus, $h_1^T[K_rt+1] \neq C_1$ is not satisfied, and the symmetric argument applies to (\ref{eq:sigmae2cond2}), so \playerone{} always plays $D_2$. \playerone{}'s payoff in every test stage is at least $u_1(D_1, C_2)$, and in every reward stage equals $u_1(C_1, C_2)$, so the limit-of-means payoff of \playerone{} is:
$$
\lompayoffone{\sigma^e_1}{\sigma^e_2} = \liminf_{T \to \infty}{\frac{1}{T}\sum_{i=1}^T{u_1(h^\infty_1[i], h^\infty_2[i])}} 
\geq
\frac{1}{K_r+1}(u_1(D_1, C_2) + K_r M_1) 
> u_1(D_1, D_2).
$$

Let $\bar{s}_1$ be any strategy for \playerone{}, define $\bar{s} = (\bar{s}_1, \sigma^e_2)$, and let $\bar{h}^\infty$ be the path of play induced by the strategy profile $\bar{s}$. By definition
of $\sigma^e_2$, in each stage, \playertwo{} either plays $C_2$ or $D_2$. There are thus two possibilities:
\begin{itemize}
    \item \playertwo{} always plays $C_2$, that is, for all $T \in \mathbb{N}$, $\bar{h}^\infty_2[T] = C_2$. 
    Assume, for contradiction, that $\bar{s}_1$ is a strictly better response to $\sigma^e_2$ than $\sigma^e_1$, that is, that $\lompayoffone{\bar{s}_1}{ \sigma^e_2} > \lompayoffone{\sigma^e_1}{\sigma^e_2}$. Then, for some stage $T \in \mathbb{N}$, $u_1(\bar{h}^\infty_1[T], C_2) > u_1(h^\infty_1[T], C_2)$, and because $u_1(C_1, C_2) = M_1$, we have $h^\infty_1[T] = D_1$ and $\bar{h}^\infty_1[T] \neq D_1$. But if $h^\infty_1[T] = D_1$, then $T = K_rt + 1$ for some $t \in \insepsetB$ and by Remark \ref{rem:setsABthenAnBn} there is then some $m \in \mathbb{N}$ such that $t \in \insepsetB_m$. Because $\bar{h}^\infty_1[K_rt + 1] \neq D_1$ and $t \in \insepsetB_m$, the definition of $\sigma^e_2$ yields that $\bar{h}^\infty_2[m+1] = D_2$, contradicting that \playertwo{} always plays $C_2$. Hence, $\lompayoffone{\bar{s}_1}{ \sigma^e_2} \leq \lompayoffone{\sigma^e_1}{\sigma^e_2}$.
    \item \playertwo{} plays $D_2$ in some stage $T+1$, that is, $\bar{h}^\infty_2[T+1] = D_2$. By definition of $\sigma^e_2$, there is some $t \in \mathbb{N}$ such that $K_rt < T$, and either condition (\ref{eq:sigmae2cond1}), or condition (\ref{eq:sigmae2cond2}), in Definition \ref{def:playertwo_good_strategy} is satisfied. Therefore, \playertwo{} continues playing $D_2$ forever, and hence:
    $$
    \lompayoffone{\bar{s}_1}{\sigma^e_2} = \liminf_{T \to \infty}{\frac{1}{T}\sum_{i=1}^T{u_1(\bar{h}^\infty_1[i], \bar{h}^\infty_2[i])}} \leq
    u_1(D_1, D_2) < \lompayoffone{\sigma^e_1}{\sigma^e_2}
    $$
    \end{itemize}

By the above, $\sigma^e_1$ is a best response to $\sigma^e_2$. To prove that no computable best response exists, assume, for contradiction, that there is a computable best response $\bar{s}_1$ to  $\sigma^e_2$. By the above analysis, we know that if \playertwo{} ever plays $D_2$, then $\lompayoffone{\sigma^e_1}{\sigma^e_2} > \lompayoffone{\bar{s}_1}{ \sigma^e_2}$. As $\bar{s}_1$ is a best response, $\bar{s}_1$ must thus ensure that \playertwo{} plays $C_2$ at every stage in the game. 
Hence, if $t \in \insepsetA$, we must have  $\bar{h}^\infty_1[K_rt+1] = C_1$,
and if $t \in \insepsetB$, we must have $\bar{h}^\infty_1[K_rt+1] = D_1$ (as otherwise,
$\sigma^e_2$ will play $C_2$).
 As $\bar{s}_1$ was assumed to be computable, there is a Turing machine $\textrm{TM}_{\bar{s}_1}$ computing $\bar{s}_1$. But then we can construct a Turing machine $T_k$ that uses $\textrm{TM}_{\bar{s}_1}$ as a subroutine and accepts if $\bar{h}^\infty_1[K_rt+1] = C_1$,
rejects if $\bar{h}^\infty_1[K_rt+1] = D_1$, and rejects if $\bar{h}^\infty_1[K_rt+1] \notin \{C_1,D_1\}$. But then $T_k$ halts on all inputs and decides the language
 $C = \{n \in \mathbb{N} : T_k \textrm{ accepts } n\}$; but $\insepsetA \subseteq C$ and $\insepsetB \cap C = \emptyset$, whence $C$ is a decidable set separating $\insepsetA$ and $\insepsetB$, 
 contradicting Proposition \ref{prop:sets_AB}. Hence, there is no computable best response to
 $\sigma^e_2$, as desired.
\end{proof}

\subsection{A complete characterization}

We now have our first main result:

\begin{theorem}\label{the:LOM_naive_the}
Let $G$ be a 2-player normal-form game. The following are equivalent under limit-of-means payoff
in $G^\infty$:
\begin{enumerate}[\textbf{(\alph*)}]
     \item \label{cond:characterization_non-trivial} $G$ is non-trivial for \playerone{}.
 \item \label{cond_characterization_no_br} There is a strategy for \playertwo{} in $G^\infty$
 that has no best response.
 \item \label{cond_characterization_no_comp_br} There is a computable strategy for \playertwo{} in $G^\infty$ that has no best response.
    \item \label{cond:characterization_br_avg}
    There is a strategy profile $s=(s_1,s_2)$ in $G^\infty$ satisfying
        \begin{enumerate}[(1)]
        \item \label{cond:characterization_br_avg_1} $s_1$ is a best response to $s_2$,
        \item \label{cond:characterization_br_avg_2} $s_2$ is computable,
        \item \label{cond:characterization_br_avg_3} $s_2$ does not have a computable best response.
        \end{enumerate}
   
\end{enumerate}
\end{theorem}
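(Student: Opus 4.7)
The plan is to assemble the four equivalences by cycling $(a) \Rightarrow (c) \Rightarrow (b) \Rightarrow (a)$ and pairing $(a) \Leftrightarrow (d)$, using only the three lemmas already established in this section. The implication $(c) \Rightarrow (b)$ is immediate, since every computable strategy is in particular a strategy, and having no best response at all certainly implies having no computable best response.

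For the direction from non-triviality to the existence of witnesses, I would assume (a) and exhibit the strategies already constructed. For $(a) \Rightarrow (c)$, the strategy $\sigma_2^d$ of Definition \ref{def:the_d_strategy} is computable and has no best response by Lemma \ref{lem:non_trivial_no_best_response}, so it witnesses (c). For $(a) \Rightarrow (d)$, I would take the profile $(\sigma_1^e, \sigma_2^e)$ from Definitions \ref{def:sigmae1} and \ref{def:playertwo_good_strategy}; Lemma \ref{lemma:inconsistent_avg_comp} directly supplies the three clauses of (d), namely that $\sigma_2^e$ is computable, that $\sigma_1^e$ is a best response to $\sigma_2^e$, and that no best response to $\sigma_2^e$ is a computable strategy.

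For the reverse direction, I would argue by contrapositive using Lemma \ref{lemma:avg_trivial_br}. If $G$ fails to be non-trivial for \playerone{}, i.e.\ if $G$ is trivial for \playerone{}, then part (1) of that lemma yields a best response to \emph{every} strategy of \playertwo{}, negating (b); part (2) yields a computable best response to every computable strategy of \playertwo{}, negating clause (3) of (d) whenever $s_2$ is computable. Hence $\neg(a) \Rightarrow \neg(b)$ and $\neg(a) \Rightarrow \neg(d)$, which closes both loops and completes the cycle of equivalences.

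The main obstacle is essentially bookkeeping rather than mathematics: one must verify that the asymmetry between \playerone{} and \playertwo{} in the statements of the three lemmas is consistent, namely that non-triviality is asserted for the player whose \emph{response} is under scrutiny while the witness strategy lives on the opposing side. Once the player-role conventions are aligned, the proof is a routine assembly of the three lemmas, with no further estimates, constructions, or computability arguments required beyond those already given.
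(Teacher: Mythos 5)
Your proposal is correct and follows essentially the same route as the paper's proof: the forward implications $(a)\Rightarrow(c)\Rightarrow(b)$ and $(a)\Rightarrow(d)$ via Lemmas \ref{lem:non_trivial_no_best_response} and \ref{lemma:inconsistent_avg_comp}, and the reverse implications by contrapositive via Lemma \ref{lemma:avg_trivial_br}. The player-role bookkeeping you flag is indeed consistent across the three lemmas (non-triviality for \playerone{}, witness strategy for \playertwo{}), so nothing further is needed.
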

\begin{proof}
We prove that \ref{cond:characterization_non-trivial} is equivalent to 
\ref{cond_characterization_no_br}, \ref{cond_characterization_no_comp_br},
and \ref{cond:characterization_br_avg}.
If \ref{cond:characterization_non-trivial} holds, then
Lemma \ref{lem:non_trivial_no_best_response} yields existence of a computable strategy
in $G^\infty$ that has no best response, whence \ref{cond_characterization_no_comp_br}
and, a fortiori, \ref{cond_characterization_no_br} hold. Similarly,
if \ref{cond:characterization_non-trivial} holds, Lemma \ref{lemma:inconsistent_avg_comp}
yields that \ref{cond:characterization_br_avg} holds. 
If either of \ref{cond_characterization_no_br} or \ref{cond_characterization_no_comp_br} hold, it follows from Lemma \ref{lemma:avg_trivial_br} that $G$ is non-trivial for \playerone{}, hence
that \ref{cond:characterization_non-trivial} holds. Similarly, if \ref{cond:characterization_non-trivial} holds, the strategy $s_2$ is computable,
but has no computable best response, whence Lemma \ref{lemma:avg_trivial_br} 
yields that $G$ is non-trivial for \playerone{}, and thus that \ref{cond:characterization_non-trivial}
holds.
\end{proof}



\begin{remark}
Theorem \ref{the:LOM_naive_the} yields a simple criterion for checking whether $G^\infty$ has a computable strategy without a best response: simply check whether the stage game $G$ is non-trivial. 
The payoff matrix of $G$ is a $\vert A_1 \vert \times \vert A_2 \vert$ matrix, and verifying whether $G$ is trivial for \playerone{} amounts to checking the condition $M_1 = \min\limits_{a_{2} \in A_{2}}\max\limits_{a_{1} \in A_{1}} u_1(a_1, a_{2})$. For any $a_2 \in A_2,$ a single scan over $A_1$ gives the value $\max\limits_{a_{1} \in A_{1}} u_1(a_1, a_{2})$. Iterating over all $a_2 \in A_2$ gives $\min\limits_{a_{2} \in A_{2}}\max\limits_{a_{1} \in A_{1}} u_1(a_1, a_{2})$, hence triviality (and thus, non-triviality) can be decided in time $\mathcal{O}(|A_1| \cdot |A_2|)$.
\end{remark}

\section{Intermezzo: Folk theorems}

Folk theorems characterize the payoff profiles that are achievable under equilibria in different settings, depending on how the payoff is computed or which kind of equilibria we are interested in. The conclusion of all folk theorems is approximately the same -- every payoff profile satisfying two minimal requirements is achievable under a Nash equilibrium. First, individual rationality, demands at least the obvious minimal payoff (the \emph{minmax} payoff) for every player and second, feasibility, ensures that the payoffs in the repeated game can be combined from the stage game payoffs. The proofs of folk theorems are usually constructive and provide us with actual strategy profiles that lead to given payoff profiles--we restate well-known folk theorems below with explicit assertions of the existence of computable equilibria (we stress that the proofs and proof ideas are not ours, but are already well-known). We shall use the folk theorems in Section \ref{sec:NE_SPE}.

\begin{definition}
Let $G = (N, A, u)$ be a normal-form game and let $\upsilon=(\upsilon_1, \dots, \upsilon_n) \in \mathbb{Q}^n$ be a payoff profile. 
\begin{enumerate}[(1)]
    \item $\upsilon$ is said to be \emph{individually rational for} \playeri{} if $\upsilon_i \geq \min\limits_{a_{-i} \in A_{-i}}\max\limits_{a_{i} \in A_{i}} u_i(a_i, a_{-i})$. Furthermore,
    $\upsilon$ is said to be \emph{strictly} individually rational for \playeri{} if the inequality is strict. $\upsilon$ is said to be individually rational if it is individually rational for all players.
    
    \item $\upsilon$ is said to be (rationally) \emph{feasible} if there exists a non-negative vector $\alpha \in \mathbb{Q}^{|A|}$ satisfying \\$\sum_{a \in A}{\alpha_a} = 1$ and $\forall i \in N: \upsilon_i = \sum_{a \in A}{\alpha_a u_i(a)}$\footnote{In some places in the literature, $\upsilon$ is called feasible if the vector $\alpha$ is merely required to be an element of $\mathbb{R}^{\vert A \vert}$
    instead of $\mathbb{Q}^{\vert A \vert}$--in which case the set of feasible payoff profiles $\upsilon$ is exactly the elements in 
    the convex hull of the set of payoff profiles of the stage game. We prefer to keep $\alpha \in \mathbb{Q}^{\vert A \vert}$ as it yields
    a cleaner statement of the constructive parts of folk theorems.}.
\end{enumerate}
\end{definition}

\begin{remark}\label{rem:sheesh_cats}
$\upsilon \in \mathbb{Q}^{\vert A \vert}$ is feasible if{f} it is in the convex hull of the payoff profiles of $G$: If $\upsilon \in \mathbb{Q}^{\vert A \vert}$ is in the convex hull of the payoff profiles of $G$, there is a vector
$\beta \in \mathbb{R}^{\vert A \vert}$ with non-negative components such that
$\sum_{a \in A} \beta_a = 1$ and $\forall i \in N : \upsilon_i = \sum_{a \in A} \beta_a u_i (a)$,
and as the components of $\beta$ are non-negative and $\upsilon_i$ and all $u_i(a)$ are rational,
then the components of $\beta$ must be rational. Conversely, if $\upsilon$ is feasible,
then $\upsilon$ is obviously in the convex hull of the point set $\cup_{a \in A}\{u(a)\}$. 

Observe that \emph{any} payoff of a strategy profile of $G^\infty$ is in the convex hull
of the payoff profiles of $G$ (this follows from direct inspection of the definition of limit-of-means payoff)---essentially the weights of the payoff profiles correspond to the frequency with which the payoff profiles occur.
\end{remark}

The following theorem is adapted from Aumann \cite{bib:aumann1981}:

\begin{theorem}[Folk Theorem--Nash Equilibria]
\label{theorem:folk_avg_ne}
Let $G = (N,A,u)$ be a normal-form game.  
\begin{enumerate}[(1)]
    \item If $\upsilon \in \mathbb{R}^N$ is a payoff profile under a Nash equilibrium in $G^\infty$ then $\upsilon$ is individually rational. 
    \item If $\upsilon \in \mathbb{Q}^N$ is feasible and individually rational then there is a Nash equilibrium $s$ in $G^\infty$ leading to the payoff profile $\upsilon$ such that every strategy in $s$ is computable.
\end{enumerate}
\end{theorem}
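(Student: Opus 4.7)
For part (1), fix a Nash equilibrium $s = (s_1, \ldots, s_n)$ of $G^\infty$ with payoff profile $\upsilon = (\upsilon_1(s), \ldots, \upsilon_n(s))$, and for each player $i$ I would exhibit a (not necessarily computable) deviation $s_i'$ that, at stage $T+1$ with history $h^T$, simulates $s_{-i}$ to obtain $a_{-i} = s_{-i}(h^T)$ and plays $\argmax_{a_i \in A_i} u_i(a_i, a_{-i})$. This secures a per-stage payoff of at least $\max_{a_i \in A_i} u_i(a_i, s_{-i}(h^T)) \geq \min_{a_{-i} \in A_{-i}} \max_{a_i \in A_i} u_i(a_i, a_{-i})$, so $\upsilon_i(s_i', s_{-i}) \geq \min_{a_{-i}} \max_{a_i} u_i(a_i, a_{-i})$; combined with the Nash inequality $\upsilon_i(s) \geq \upsilon_i(s_i', s_{-i})$, this yields individual rationality of $\upsilon$.

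For part (2) the plan is the classical periodic-path plus trigger-punishment construction. Feasibility of $\upsilon$ supplies non-negative rationals $(\alpha_a)_{a \in A}$ summing to one with $\upsilon_i = \sum_a \alpha_a u_i(a)$; clearing denominators gives integers $p_a \geq 0$ and $q$ with $\sum_a p_a = q$ and $\alpha_a = p_a/q$. Fix any ordering of $A$ and define the deterministic path $\bar h^\infty$ of period $q$ in which each profile $a \in A$ is played exactly $p_a$ times per period; a direct computation shows its limit-of-means payoff profile is $\upsilon$ (the averages over blocks of length $q$ are exactly $\upsilon$, and the residual fractional block contributes $O(1/T)$ to the sum). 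For each player $j$ fix a pure punishment profile $\hat a^{(j)}_{-j} \in A_{-j}$ attaining $\min_{a_{-j}} \max_{a_j} u_j(a_j, a_{-j})$, and define the strategy $s_i$ for player $i$ as follows: scan $h^T$ for the earliest stage $t$ at which some player's actual play differs from $\bar h^\infty$; if none exists, play $\bar h^\infty_i[T+1]$, otherwise play the $i$th coordinate of $\hat a^{(j)}_{-j}$, where $j$ is the smallest-indexed deviator at stage $t$.

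I would then verify that $s = (s_1, \ldots, s_n)$ is a Nash equilibrium with payoff profile $\upsilon$. Absent deviation the strategies generate $\bar h^\infty$, so every player receives $\upsilon_i$. If player $i$ unilaterally deviates at some stage $t+1$, every stage from $t+2$ onward has opponents playing $\hat a^{(i)}_{-i}$, bounding $i$'s per-stage payoff by $\min_{a_{-i}} \max_{a_i} u_i(a_i, a_{-i}) \leq \upsilon_i$ (the latter by individual rationality); the first $t+1$ stages contribute nothing to the $\liminf$ because of the $1/T$ factor, so no deviation strictly improves on $\upsilon_i$. Computability is routine because $\bar h^\infty_i[T+1]$ is recovered by a table lookup on $T \bmod q$, deviation detection is a finite comparison of $h^T$ against the corresponding prefix of $\bar h^\infty$, and the punishment profiles are fixed finite data. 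I expect the only slightly delicate step to be the convention for punishing simultaneous deviations, but since Nash equilibrium restricts only unilateral deviations, any fixed tie-breaking rule suffices---this is precisely the place where the subgame-perfect analogue becomes substantially harder, since there punishment behaviour must itself survive as an equilibrium.
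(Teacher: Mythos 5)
Your proposal is correct and follows essentially the same route as the paper: the same best-response-to-the-opponents'-current-action deviation for part (1), and the same periodic path (denominator-cleared frequencies $\beta_a/\gamma$, i.e.\ your $p_a/q$) with grim-trigger minmax punishment and table-lookup computability for part (2). The only cosmetic differences are that the paper phrases part (1) as a proof by contradiction and does not spell out the tie-breaking convention for simultaneous deviations, which, as you note, is immaterial for Nash (as opposed to subgame-perfect) equilibrium.
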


Theorem $\ref{theorem:folk_avg_ne}$ describes a set of payoff profiles that are achievable under a Nash equilibrium. Aumann and Shapley \cite{Aumann1994LongTermC} (see also \cite[Prop.\ 146.2]{Osborne1994}) prove a similar result for subgame-perfect equilibria. As every subgame-perfect equilibrium is also a Nash equilibrium, Theorem $\ref{theorem:folk_avg_ne}$ implies that every payoff profile under a subgame-perfect equilibrium is individually rational. Surprisingly, the sufficient condition for the existence of subgame-perfect equilibria with a given payoff profile is not stricter than for Nash equilibria:

\begin{theorem}[Folk Theorem--subgame-perfect equilibria]
\label{theorem:folk_avg_spe}
Let $G = (N,A,u)$ be a normal-form game, and let $\upsilon \in \mathbb{Q}^{N}$ be a feasible and individually rational payoff profile.
There is a subgame-perfect equilibrium $s$ in $G^\infty$ leading to the payoff profile $\upsilon$ in every subgame such that every strategy in $s$ is computable.
\end{theorem}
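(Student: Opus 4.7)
The plan is to adapt the standard Aumann--Shapley construction (see, e.g., \cite[Prop.\ 146.2]{Osborne1994}) so that every component is manifestly Turing-computable. Compared to Theorem~\ref{theorem:folk_avg_ne}, the extra ingredient for SPE is that off-equilibrium punishment must itself be credible in every subgame, which is achieved by making the punishment finite and followed by return to cooperation; the key lever throughout is that under $\liminf$ averaging any finite prefix of the play is invisible to the limit-of-means payoff. First, use the rational feasibility of $\upsilon$ to construct a finite cyclic path of play: write $\alpha_a = k_a/L$ with $k_a \in \mathbb{N}_0$ and $\sum_{a \in A} k_a = L$, fix an enumeration of $A$, and form $\pi \in A^L$ in which each profile $a$ appears exactly $k_a$ times. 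The infinite repetition $\pi^\infty$ has per-cycle, and hence limit-of-means, payoff $\sum_a \alpha_a u_i(a) = \upsilon_i$ for every player $i$.

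Second, specify each $s_i$ as a finite-state machine with counters, sitting in a \emph{cooperative} mode or in a \emph{punish-$j$} mode for some $j \in N$, while maintaining the current position in $\pi$ together with any remaining punishment length. In cooperative mode $s_i$ plays its component of the next profile in $\pi$; in punish-$j$ mode (with $i \neq j$) $s_i$ plays its fixed component of a minmax profile against $j$. A unique single-player deviation in the previous stage switches the machine to punishment of that deviator (or restarts punishment with the new deviator as target); after $N$ stages of punishment, for a fixed finite $N$, play resumes $\pi$ at the position it would have occupied. All objects in this machine -- the cycle $\pi$, the minmax profiles, and the counters -- are finite, so a Turing machine computes $s_i$ by a single pass over the observed history.

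Third, verify subgame perfection. Every subgame is reached in one of finitely many internal states, and the continuation strategy profile is of the same form, so after a finite transient the play settles into $\pi^\infty$; because $\liminf$ ignores finite prefixes, the continuation payoff profile is exactly $\upsilon$ in every subgame. For Nash equilibrium within each subgame, (i)~a one-shot deviation by player $j$ yields continuation payoff $\upsilon_j$ because the finite gain is absorbed by $\liminf$; (ii)~sustained deviation triggers perpetual punishment and, by individual rationality $\upsilon_j \geq \min_{a_{-j}} \max_{a_j} u_j(a_j, a_{-j})$, caps the deviator's payoff at minmax, hence at $\leq \upsilon_j$; and (iii)~punishers have no incentive to deviate because $N$ rounds of any behaviour cannot alter their own limit-of-means, while deviating redirects the punishment to themselves.

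The main obstacle is point~(ii) when $\upsilon_j$ equals the minmax value exactly: a deviator timing isolated one-shot gains between stretches of cooperation could, a priori, push their $\liminf$ strictly above $\upsilon_j$. The resolution is the standard Aumann--Shapley adjustment of choosing $N$ large enough (depending on the spread between $M_j$ and $\upsilon_j$) so that each gain stage is followed by enough punishment stages to dominate in the Cesaro average, together with the observation that the restart-on-further-deviation rule keeps this ratio under control throughout any infinite deviation sequence.
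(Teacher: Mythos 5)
There is a genuine gap, and it sits exactly at the obstacle you flagged at the end. Your punishment phase has a \emph{fixed} finite length $N$, and you claim that choosing $N$ large enough (as a function of the spread between $M_j = \max_{a\in A} u_j(a)$ and $\upsilon_j$) handles the boundary case $\upsilon_j = \mu_j$, where $\mu_j = \min_{a_{-j}}\max_{a_j} u_j(a_j,a_{-j})$. It does not. Consider the deviation strategy for player $j$ that deviates once, collects up to $M_j$ in that stage, sits through the $N$ punishment stages (earning at most $\mu_j$ in each, since the others are minmaxing $j$), and then immediately deviates again when play resumes $\pi$. The running averages converge to $\frac{M_j + N\mu_j}{N+1} = \mu_j + \frac{M_j-\mu_j}{N+1}$, which is \emph{strictly} greater than $\mu_j$ for every finite $N$ whenever $M_j > \mu_j$. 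When $\upsilon_j = \mu_j$ (which the hypothesis of weak individual rationality permits), this deviation is strictly profitable, so your profile is not even a Nash equilibrium of the whole game. The required condition $M_j - \upsilon_j \leq N(\upsilon_j - \mu_j)$ is solvable for $N$ only when $\upsilon_j > \mu_j$; your argument therefore proves the folk theorem only for \emph{strictly} individually rational profiles.

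The Aumann--Shapley fix is not a larger constant but a punishment length that \emph{grows with the stage of deviation}: in the paper's proof, a deviation at stage $T$ is punished for $T^2 - T$ stages, i.e., until stage $T^2$. If $j$ deviates infinitely often at stages $T_0 < T_1 < \cdots$, then at the end of the $n$th punishment phase the running average is at most $\frac{1}{T_n^2}\bigl(T_n M_j + (T_n^2 - T_n)\mu_j\bigr) \leq \mu_j + O(1/n)$, and since the limit-of-means payoff is a $\liminf$ it is bounded by $\mu_j \leq \upsilon_j$ along this subsequence; finitely many deviations are absorbed by the $\liminf$ as in your argument. Because each punishment phase is still finite, every subgame returns to $\pi$ after a finite transient and yields continuation payoff $\upsilon$, preserving subgame perfection. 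The rest of your construction (the rational cycle $\pi$, computability of the strategies, and the treatment of punishers) matches the paper and is fine; only the punishment schedule needs to be made stage-dependent.
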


\section{Computability under Nash and subgame-perfect equilibria}\label{sec:NE_SPE}

Theorem \ref{the:LOM_naive_the} shows that any game satisfying some simple conditions has a computable strategy without a computable best response. However, strategy profiles are generally not of much interest unless they are Nash equilibria, or subgame-perfect equilibria. We treat these cases in the present section.
We first prove two auxiliary lemmas before obtaining a complete characterization at the end of the section.

\subsection{Nash equilibria}

We first treat Nash equilibria. Observe that if $s = (s_1,s_2)$ is a Nash equilibrium of $G^\infty$, then $s_1$ is a best response to $s_2$.

\begin{lemma}\label{lem:comp_avg_ne}
Let $G = (N,A,u)$ be a 2-player normal-form game. The following are equivalent:
\begin{enumerate}[\textbf{(\alph*)}]
    \item \label{cond:le_foo} There is a strategy profile $s=(s_1,s_2)$ in $G^\infty$ satisfying
        \begin{enumerate}[(1)]
        \item \label{cond:characterization_ne_avg_1} $s$ is a Nash equilibrium of $G^\infty$,
        \item \label{cond:characterization_ne_avg_2} $s_2$ is computable,
        \item \label{cond:characterization_ne_avg_3} $s_2$ does not have a computable best response.
        \end{enumerate}
    \item \label{ineq:gt_minmax_ne} $\vert A_1 \vert \geq 2,$ and there is a Nash equilibrium $s'$ of $G^\infty$ that is strictly individually rational for \playerone{}, that is, $s'$ satisfies:
    \begin{align*}
        \lompayoffname_1(s') > \min\limits_{a_{2} \in A_{2}}\max\limits_{a_{1} \in A_{1}} u_1(a_1, a_2).
    \end{align*}
\end{enumerate}
\end{lemma}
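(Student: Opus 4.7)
\textbf{Direction (\ref{cond:le_foo}) $\Rightarrow$ (\ref{ineq:gt_minmax_ne}).}
First $|A_1|\geq 2$: if $|A_1|=1$ then Player 1's only strategy is trivially computable and is a best response to any $s_2$, contradicting \ref{cond:characterization_ne_avg_3}. For the strict inequality, set $m_1 := \min_{a_2}\max_{a_1}u_1(a_1,a_2)$ and note that $\upsilon_1(s) \geq m_1$ by the necessary part of Theorem \ref{theorem:folk_avg_ne}. Suppose toward contradiction that equality holds, and define $s_1^*$ to play, on any history $h$, the action $\argmax_{a_1} u_1(a_1, s_2(h))$; this is computable because $s_2$ is, and in each stage it earns at least $m_1$, so $\upsilon_1(s_1^*, s_2)\geq m_1 = \upsilon_1(s_1,s_2)$. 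Since $s_1$ is a best response to $s_2$ (as $s$ is a Nash equilibrium), $s_1^*$ is also a best response, contradicting \ref{cond:characterization_ne_avg_3}. Hence $\upsilon_1(s) > m_1$, and $s' := s$ witnesses (\ref{ineq:gt_minmax_ne}).

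\textbf{Direction (\ref{ineq:gt_minmax_ne}) $\Rightarrow$ (\ref{cond:le_foo}).}
My plan is to combine the folk theorem with the test-stage construction of Lemma \ref{lemma:inconsistent_avg_comp}. Write $m_i := \min_{a_{-i}}\max_{a_i}u_i$. I first obtain a rational, feasible, individually rational payoff $\upsilon^* = (\upsilon_1^*, \upsilon_2^*)$ with $\upsilon_1^* > m_1$ and $\upsilon_2^* \geq m_2$: the intersection $P'$ of the convex hull of $\{u(a) : a \in A\}$ with the half-plane $\{\upsilon_2 \geq m_2\}$ is a nonempty rational polytope containing $\upsilon(s')$, and maximizing the linear functional $\upsilon_1$ over $P'$ is attained at a rational vertex with value at least $\upsilon_1(s') > m_1$; I take this vertex as $\upsilon^*$. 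By Theorem \ref{theorem:folk_avg_ne} there is then a computable Nash equilibrium $\hat s$ of $G^\infty$ achieving $\upsilon^*$, of the standard form ``follow a prescribed cyclic path; permanently minmax the opponent on any detected deviation''.

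I now modify $\hat s$ into $s = (s_1, s_2)$ by interleaving \emph{sparse} test stages at positions $\tau_k := 2^k$. On stage $\tau_k$ the prescribed play has Player 1 play $C_1$ if $k\in\insepsetA$, $D_1$ if $k\in\insepsetB$, and a fixed default otherwise, while Player 2 plays a fixed $\tilde a_2$; on every other stage both players continue the folk-theorem cycle (reindexed by non-test-stage count). In analogy with Definition \ref{def:playertwo_good_strategy}, at each stage $T$ the strategy $s_2$ uses the decidable sets $\insepsetA_{T-1}, \insepsetB_{T-1}$ (Remark \ref{rem:setsABthenAnBn}) to check whether on some past test stage $\tau_k$ Player 1 played $\neq C_1$ with $k\in\insepsetA_{T-1}$ or $\neq D_1$ with $k\in\insepsetB_{T-1}$, and also checks for any deviation from the cycle on a past non-test stage; on any such detection it permanently minmaxes Player 1. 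Symmetrically, $s_1$ permanently minmaxes Player 2 after any Player 2 deviation from the prescribed action. Because $\insepsetA_T, \insepsetB_T$ are decidable and the cycle position is computable, $s_2$ is computable, yielding \ref{cond:characterization_ne_avg_2}.

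The verification has three parts. (i) Test stages have density $0$ (only $O(\log T)$ of them up to $T$), so their total payoff contribution is $o(T)$ and the limit-of-means payoff on the prescribed path equals $\upsilon^*$ for both players. (ii) Any Player 1 deviation on a test stage with $k\in\insepsetA\cup\insepsetB$ or on any non-test stage is eventually detected and punished, giving limit payoff $\leq m_1 < \upsilon_1^*$, so $s_1$ is a best response to $s_2$; any Player 2 deviation is detected immediately and punished to $\leq m_2 \leq \upsilon_2^*$, so $s_2$ is a (weak) best response to $s_1$, giving \ref{cond:characterization_ne_avg_1}. (iii) If $s_1^*$ were a computable best response to $s_2$, avoiding permanent punishment forces $s_1^*$ to play $C_1$ on $\tau_k$ when $k\in\insepsetA$ and $D_1$ when $k\in\insepsetB$; simulating $s_1^*$ on the prescribed history up to stage $\tau_k$ would then decide a set separating $\insepsetA$ from $\insepsetB$, contradicting Proposition \ref{prop:sets_AB}; this yields \ref{cond:characterization_ne_avg_3}. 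The main obstacle I anticipate is the boundary case $\upsilon_2^* = m_2$ (which arises when $G$ is trivial for Player 2 but nontrivial for Player 1): here Player 2 is exactly indifferent between following and deviating, and the sparse positioning of test stages is essential to keep Player 2's limit payoff exactly at $\upsilon_2^*$ rather than dropping below $m_2$, which would otherwise break $s_2$'s best-response property.
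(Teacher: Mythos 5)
Your proof is correct and follows essentially the same route as the paper's: the forward direction via the computable ``simulate $s_2$ and best-respond myopically'' strategy, and the reverse direction by grafting sparse test stages at powers of two (encoding $\insepsetA$ and $\insepsetB$, enforced by permanent minmax punishment) onto a computable folk-theorem equilibrium and then invoking recursive inseparability. Your explicit extraction of a rational target payoff $\upsilon^*$ as a vertex of the feasible polytope is in fact slightly more careful than the paper, which applies Theorem \ref{theorem:folk_avg_ne} to $\lompayoffname(s')$ without addressing whether that profile is rational.
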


\begin{proof}
If $G^\infty$ does not have a Nash equilibrium then the equivalence is obvious. If $G^\infty$
does have a strategy profile $s'$ that is a Nash equilibrium,
 Theorem $\ref{theorem:folk_avg_ne}$ yields that $s'$ leads to an individually rational payoff profile,
 and thus in particular $\lompayoffname_1(s') \geq \min\limits_{a_{2} \in A_{2}}\max\limits_{a_{1} \in A_{1}} u_1(a_1, a_2)$. 

We first prove that $\neg \ref{ineq:gt_minmax_ne} \Rightarrow \neg \ref{cond:le_foo}$.
Assume that \ref{ineq:gt_minmax_ne} does not hold; if $\vert A_1 \vert = 1$, then there is exactly
one strategy for \playerone{}, namely the one always playing the single action available to \playerone{},
and is obviously both computable and a best response to any strategy of \playertwo{},
and hence \ref{cond:le_foo} does not hold. If $\vert A_1 \vert \geq 2$ and \ref{ineq:gt_minmax_ne}
does not hold,
then every Nash equilibrium $s'$ of $G^\infty$ leads to a payoff profile $\lompayoffname(s')$ satisfying $\lompayoffname_1(s') = \min\limits_{a_{2} \in A_{2}}\max\limits_{a_{1} \in A_{1}} u_1(a_1, a_2)$.
 If there is no Nash equilibrium $s = (s_1,s_2)$ such that $s_2$ is computable, it follows immediately
 that \ref{cond:le_foo} does not hold. So, assume that $s = (s_1,s_2)$ is a Nash equilibrium such that $s_2$ is computable. Define $\bar{s}_1$ to be the strategy of Player $1$ that in stage $T+1$, given a finite history $h^T \in \mathcal{H}^T_{G^\infty}$, first simulates $s_2$ to obtain $\bar{a}_2 = s_2(h^T)$, and then
plays any action $\bar{a}_1$ satisfying $u_1(\bar{a}_1, \bar{a}_2) = M_1(\bar{a}_2) \geq \min\limits_{a_{2} \in A_{2}}\max\limits_{a_{1} \in A_{1}} u_1(a_1, a_2)$. 
As $\lompayoffone{\bar{s}_1}{s_2} \geq \min\limits_{a_{2} \in A_{2}}\max\limits_{a_{1} \in A_{1}} u_1(a_1, a_2) = \lompayoffone{s_1}{s_2}$ and $s_1$ is a best response to $s_2$, $\bar{s}_1$ is also a best response to $s_2$. Moreover, $\bar{s}_1$ is clearly a computable strategy because $s_2$ is. Thus,
\ref{cond:characterization_ne_avg_3} does not hold, and $\neg \ref{ineq:gt_minmax_ne} \Rightarrow \neg \ref{cond:le_foo}$ follows.

We now prove $\ref{ineq:gt_minmax_ne} \Rightarrow \ref{cond:le_foo}$. Assume that $\ref{ineq:gt_minmax_ne}$ holds; by Theorem $\ref{theorem:folk_avg_ne}$, there is a payoff profile that is feasible and strictly individually rational for \playerone{} leading to a Nash equilibrium $s=(s_1, s_2)$ where $s_1$ and $s_2$ are computable, and $\lompayoffname(s) = \lompayoffname(s')$. We will modify $s$ so that no best response to \playertwo{}'s strategy is computable. The modification adds a test procedure to the stages of $s_2$ that are a power of 2. \playertwo{} will verify if \playerone{} played the correct action in all previous test stages, and if this test is passed, both players will pretend that they played according to $s$ in test stages when deciding to play the next action.
Formally, fix any $C_1, D_1 \in A_1$ such that $C_1 \neq D_1$ and define $D_2 \in A_2$ as $D_2 = \argmin\limits_{a_2 \in A_2}\max\limits_{a_1 \in A_1}{u_1(a_1, a_2)},$ hence $M_1(D_2) < \lompayoffname_1(s)$. Let $h^{(s)}$ be the path of play obtained by playing $s.$ For a finite history $h^T \in \mathcal{H}^T_{G^\infty},$ define $\textrm{fix}(h^T) \in \mathcal{H}^T_{G^\infty}$ by $\textrm{fix}(h^T)[t] = h^{(s)}[t]$ if $t=2^i$ for some $i \in \mathbb{N}$, and $\textrm{fix}(h^T)[t] = h^T[t]$ otherwise.
Now, define $\bar{s}_2$ to be the strategy for \playertwo{} that, given a~finite history $h^T \in \mathcal{H}^T_{G^\infty},$  plays the following action in stage $T+1$:
\begin{enumerate}
    \item If for any $t$ satisfying $0 < 2^t \leq T$, either
    \begin{align*}
        h^T_1[2^t] \neq C_1\ \&\ (t \in \insepsetA_T),
    \end{align*} or 
    \begin{align*}
        h^T_1[2^t] \neq D_1\ \&\ (t \in \insepsetB_T),
    \end{align*} play $D_2$. 
    \item Otherwise, play $s_2(\textrm{fix}(h^T))$.
\end{enumerate}
Now, define $\bar{s}_1$ to be the strategy for \playerone{} that, given a~finite history $h^T \in \mathcal{H}^T_{G^\infty},$ plays the following in stage $T+1$: 
\begin{enumerate}
    \item If $T+1=2^t$ for some $t \in \insepsetB$, play $D_1$.
    \item If $T+1=2^t$ for some $t \in \insepsetA$, play $C_1$.
    \item Otherwise, play $s_1(\textrm{fix}(h^T))$.
\end{enumerate}
We claim that $\bar{s} = (\bar{s}_1, \bar{s}_2)$ is a Nash equilibrium of $G^\infty$, that $\bar{s}_2$
is a computable strategy, and that $\bar{s}_2$ does not have a computable best response, that is,
all three conditions of \ref{cond:le_foo} are satisfied. First observe that  $\bar{s}_2$ is a computable strategy: As $s_2$ is computable, a Turing machine that computes
it can be used as a subroutine by a Turing machine $\textrm{TM}$ that, by Remark \ref{rem:setsABthenAnBn}, on input $T$ can generate the G{\"o}del numbers of
Turing machines deciding $\insepsetA_T$ and $\insepsetB_T$, and subsequently simulate these on input $t$ using a universal Turing machine as a subroutine.

Next, we prove that $\bar{s}$ is a Nash equilibrium. Observe that the payoffs in test stages $T$ satisfy:
$$
\frac{1}{T} \sum_{i=1}^{\lfloor \log_2{T} \rfloor}{u_1(h_{\bar{s}}^\infty[2^i]}) = 
\frac{1}{T} \sum_{i=1}^{\lfloor \log_2{T} \rfloor}{u_1(h_{s}^\infty[2^i]}) = 
\frac{O(\log_2{T})}{T} 
$$
and the paths of play $h_s^\infty$ and $h_{\bar{s}}^\infty$ are identical at non-test-stages, whence at any test
stage $T$:
$$
 \sum_{i=1}^T{u_1(h^\infty_{\bar{s}}[i])} -
\sum_{i=1}^{\lfloor \log_2{T} \rfloor}{u_1(h_{\bar{s}}^\infty[2^i])}  =   \sum_{i=1}^T{u_1(h^\infty_{s}[i])} - 
\sum_{i=1}^{\lfloor \log_2{T} \rfloor}{u_1(h_{s}^\infty[2^i])}
$$

Hence, \playerone{}'s payoff is:

\begin{align*}
\lompayoffname_1(\bar{s}) &= 
\liminf_{T \to \infty}{\frac{1}{T} \sum_{i=1}^{T}{u_1(h_{\bar{s}}^\infty[i]})} \\  
&= \liminf_{T \to \infty}{\frac{1}{T} \left( \sum_{i=1}^{\lfloor \log_2{T} \rfloor}{u_1(h_{\bar{s}}^\infty[2^i])} + \sum_{i=1}^T{u_1(h^\infty_{\bar{s}}[i])} - 
\sum_{i=1}^{\lfloor \log_2{T} \rfloor}{u_1(h_{\bar{s}}^\infty[2^i])} \right)} \\
&\geq \liminf_{T \to \infty}{\frac{1}{T} \sum_{i=1}^{\lfloor \log_2{T} \rfloor}{u_1(h_{\bar{s}}^\infty[2^i])} } + \liminf_{T \to \infty}{\frac{1}{T} \left( \sum_{i=1}^T{u_1(h^\infty_{s}[i])} - 
\sum_{i=1}^{\lfloor \log_2{T} \rfloor}{u_1(h_{s}^\infty[2^i])} \right)} \\
&= \liminf_{T \to \infty}{\frac{O(\log_2{T})}{T}}  + 
\liminf_{T \to \infty}{\frac{1}{T}
\left(\sum_{i=1}^T u_1(h^\infty_s[i]) - O(\log_2 T)\right)} \\   
 &= \liminf_{T \to \infty}{\frac{1}{T}\sum_{i=1}^T u_1(h^\infty_s[i])}
 = \lompayoffname_1(s) > M_1(D_2) 
\end{align*}

By definition of $\bar{s}_2$, if \playerone{} ever deviates from the strategy $\bar{s}_1$ in a test stage ($T = 2^t$), \playerone{} will obtain the limit-of-means payoff $M_1(D_2) < \lompayoffname_1(\bar{s})$, making the deviation unprofitable. If \playerone{} deviates from $\bar{s}_1$ (and hence also $s_1$) at a non-test-stage, \playerone{} cannot 
obtain strictly greater payoff than $\lompayoffname_1(\bar{s}) = \lompayoffname_1(s)$ because $s$ is a Nash equilibrium. The paths of play obtained by playing $s$ and $\bar{s}$ are identical outside of test stages, so by a symmetric argument, \playertwo{}'s payoff $\lompayoffname_2(\bar{s}) = \lompayoffname_2(s).$ If \playertwo{} ever deviates, \playerone{} punishes them by switching to the minmax against \playertwo{} forever. Because $\lompayoffname(s)$ is individually rational, this deviation also cannot yield strictly greater payoff, and hence $\bar{s}$ is a Nash equilibrium. 

It remains to prove that $\bar{s}_2$ has no computable best response. 
Observe that any best response $\bar{s}_1'$ to $\bar{s}_2$ cannot deviate from $\bar{s}_1$ in any test stage, because otherwise \playertwo{} would play $D_2$ forever, resulting
in \playerone{} obtaining payoff at most $M_1(D_2) < \lompayoffname_1(\bar{s})$. Assume, for contradiction, that $\bar{s}_1'$ were computable; then, there is a Turing machine $\textrm{TM}_1'$ computing $\bar{s}_1'$. As $\bar{s}_2$ is computable, let $\textrm{TM}_2$ be a Turing machine computing $\bar{s}_2$, and let $\textrm{TM}_g$ be a Turing machine that, on input $t \in \mathbb{N}$, first simulates both $\textrm{TM}_1'$ and $\textrm{TM}_2$ $2^t$ times to obtain the action profile $h^{2^t}[2^t]$ induced by $(\bar{s}_1', \bar{s}_2)$. Then, $\textrm{TM}_g$ accepts if $h_1^{2^t}[2^t] = C_1$, and rejects otherwise. Observe that $\textrm{TM}_g$ then decides a language $C$ such that $\insepsetA \subseteq C$
and $\insepsetB \cap C = \emptyset$, contradicting the fact
that $\insepsetA$ and $\insepsetB$ are recursively inseparable by Proposition \ref{prop:sets_AB}.
\end{proof}

\subsection{Subgame-perfect equilibria}

As Theorem \ref{theorem:folk_avg_spe} states that focusing on subgame-perfect equilibria does not narrow the set of payoff profiles compared to Nash equilibria, it should be
no surprise that the below lemma for subgame-perfect equilibria requires the same conditions as Lemma \ref{lem:comp_avg_ne}

\begin{lemma}
\label{lem:comp_avg_spe}
Let $G$ be a 2-player normal-form game. The following are equivalent:
\begin{enumerate}[\textbf{(\alph*)}]
    \item \label{cond:spe_avg} There is a strategy profile $s=(s_1,s_2)$ in $G^\infty$ satisfying
        \begin{enumerate}[(1)]
        \item \label{cond:characterization_spe_avg_1} $s$ is a subgame-perfect equilibrium of $G^\infty$,
        \item \label{cond:characterization_spe_avg_2} $s_2$ is computable,
        \item \label{cond:characterization_spe_avg_3} $s_2$ does not have a computable best response.
        \end{enumerate}
    \item \label{ineq:gt_minmax_spe} $\vert A_1 \vert \geq 2,$ and there is a Nash equilibrium $s'$ of $G^\infty$ that is strictly individually rational for \playerone{}, that is, $s'$ satisfies:
    \begin{align}
        \lompayoffname_1(s') > \min\limits_{a_{2} \in A_{2}}\max\limits_{a_{1} \in A_{1}} u_1(a_1, a_2).
    \end{align}
    \end{enumerate}
\end{lemma}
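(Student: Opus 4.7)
The plan is to mirror the proof of Lemma \ref{lem:comp_avg_ne}, invoking Theorem \ref{theorem:folk_avg_spe} in place of Theorem \ref{theorem:folk_avg_ne} and upgrading the punishment phases so that they are themselves subgame-perfect. The direction $\ref{cond:spe_avg} \Rightarrow \ref{ineq:gt_minmax_spe}$ is immediate: any strategy profile satisfying conditions (1)--(3) of \ref{cond:spe_avg} is in particular a Nash equilibrium satisfying (1)--(3) of \ref{cond:le_foo}, so Lemma \ref{lem:comp_avg_ne} supplies \ref{ineq:gt_minmax_ne}, which coincides with \ref{ineq:gt_minmax_spe}. The content lies in the converse.

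For $\ref{ineq:gt_minmax_spe} \Rightarrow \ref{cond:spe_avg}$, I would first apply Theorem \ref{theorem:folk_avg_spe} to the payoff profile $\lompayoffname(s')$ of the Nash equilibrium $s'$ supplied by \ref{ineq:gt_minmax_spe}; this profile is feasible by Remark \ref{rem:sheesh_cats} (it is the limit-of-means payoff of a strategy profile, hence lies in the convex hull of stage-game payoffs) and individually rational by Theorem \ref{theorem:folk_avg_ne}, yielding a computable SPE $s = (s_1, s_2)$ with $\lompayoffname_1(s) > M_1(D_2) = \min_{a_2} \max_{a_1} u_1(a_1,a_2)$. I would then impose the identical test-stage modification from Lemma \ref{lem:comp_avg_ne}: at each stage $T = 2^t$, \playerone{} is required to play $C_1$ when $t \in \insepsetA_T$ and $D_1$ when $t \in \insepsetB_T$; at non-test stages both players follow $s$ applied to the fixed history $\textrm{fix}(h^T)$ that overwrites each test-stage action by the on-path action $h_s^\infty[2^t]$. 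Call the resulting profile $\bar{s} = (\bar{s}_1, \bar{s}_2)$.

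The main obstacle is that the ``play $D_2$ forever'' threat from Lemma \ref{lem:comp_avg_ne} is not subgame-perfect: once \playertwo{} commits to perpetual $D_2$, \playerone{} best-responds with $\argmax_{a_1} u_1(a_1, D_2)$, and in the resulting subgame \playertwo{} may have strictly profitable deviations. To remedy this, I would replace that threat with a transition into a computable punishment SPE $\hat{s}$ satisfying $\lompayoffname_1(\hat{s}) < \lompayoffname_1(s)$. Existence of $\hat{s}$ follows from another application of Theorem \ref{theorem:folk_avg_spe}: the $\lambda$-weighted convex combination of the stage profile $(\argmax_{a_1} u_1(a_1, D_2), D_2)$ (which gives \playerone{} payoff exactly $M_1(D_2)$) with $\lompayoffname(s')$ is feasible for every $\lambda \in (0,1)$, and for $\lambda$ sufficiently small it remains individually rational for \playertwo{} by continuity, while having first coordinate strictly below $\lompayoffname_1(s') = \lompayoffname_1(s)$. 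Symmetrically, a computable punishment SPE $\hat{s}'$ for deviations by \playertwo{} is obtained. In $\bar{s}$, a test-stage deviation by \playerone{} triggers $\hat{s}$, and any deviation by \playertwo{} triggers $\hat{s}'$; because $\hat{s}$ and $\hat{s}'$ are SPE in every subgame, these punishments remain credible in every subgame, which is precisely what subgame-perfection of $\bar{s}$ requires.

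The remaining verifications closely parallel Lemma \ref{lem:comp_avg_ne}: $\bar{s}_2$ is computable because $s_2, \hat{s}_2, \hat{s}'_2$ are, and $\insepsetA_T, \insepsetB_T$ are uniformly decidable by Remark \ref{rem:setsABthenAnBn}; the $O(\log_2 T)/T$ contribution of test stages vanishes in the limit, so $\lompayoffname(\bar{s}) = \lompayoffname(s)$; within any subgame, a test-stage deviation by \playerone{} triggers $\hat{s}$ and caps \playerone{}'s payoff strictly below $\lompayoffname_1(s)$, non-test-stage deviations are unprofitable because $s$ is itself an SPE evaluated on the fixed history, and symmetric arguments bind \playertwo{}; finally, any computable best response $\bar{s}_1'$ to $\bar{s}_2$ must match $\bar{s}_1$ at every test stage, so simulating $(\bar{s}_1', \bar{s}_2)$ for $2^t$ stages and reading off the resulting test-stage action decides a language separating $\insepsetA$ from $\insepsetB$, contradicting Proposition \ref{prop:sets_AB}.
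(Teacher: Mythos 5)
Your forward direction and the overall architecture (invoke Theorem \ref{theorem:folk_avg_spe}, add test stages at powers of two, finish with recursive inseparability) match the paper, but your punishment mechanism has a genuine gap. You replace the paper's punishment with a permanent switch to a ``worse'' computable SPE $\hat{s}$ satisfying $\lompayoffname_1(\hat{s}) < \lompayoffname_1(s)$, and you justify its existence by mixing the stage profile $(\argmax_{a_1} u_1(a_1,D_2), D_2)$ with $\lompayoffname(s')$ and appealing to continuity for \playertwo{}'s individual rationality. That continuity argument needs $\lompayoffname_2(s')$ to be \emph{strictly} above \playertwo{}'s minmax, which condition \ref{ineq:gt_minmax_spe} does not supply (it is strict only for \playerone{}). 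Worse, the object $\hat{s}$ can simply fail to exist. Take $A_1=\{a,b\}$, $A_2=\{c,d,e\}$ with $u(a,c)=u(b,d)=(2,1)$, $u(a,d)=u(b,c)=(0,0)$, and $u(a,e)=u(b,e)=(1,0)$. Here \playerone{}'s minmax is $1$ (via column $e$), \playertwo{}'s minmax is $1$, and the only feasible individually rational payoff profile is $(2,1)$, so condition \ref{ineq:gt_minmax_spe} holds; yet by Theorem \ref{theorem:folk_avg_ne}(1) and Remark \ref{rem:sheesh_cats} \emph{every} Nash equilibrium of $G^\infty$ gives \playerone{} payoff exactly $2$, so no equilibrium punishment with lower \playerone{}-payoff exists and your construction cannot get off the ground. (In this example your specific mixture gives \playertwo{} payoff $1-\lambda < 1$ for every $\lambda>0$, so it is never individually rational.)

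The paper avoids this entirely by using \emph{finite} punishment phases: upon detecting a test-stage deviation at stage $T_D$, \playertwo{} minmaxes \playerone{} only for the next $T_D^2 - T_D$ stages and then reverts to the main path (the Aumann--Shapley device from the proof of Theorem \ref{theorem:folk_avg_spe}). These punishments are credible under limit-of-means precisely because they are finite---the punisher sacrifices only finitely many stages each time, leaving their own limit-of-means payoff untouched---while a deviator who fails tests infinitely often sees the running average pulled down toward the minmax at the end of each quadratically long punishment phase, so the liminf drops to at most $\min_{a_2}\max_{a_1}u_1(a_1,a_2) < \lompayoffname_1(s)$. A secondary consequence is that a best response may deviate at finitely many test stages, so the separating Turing machine needs a finite lookup table for the inputs $t$ up to the last deviation, which the paper includes and your reduction omits; with a permanent punishment that table would indeed be unnecessary, but the permanent punishment is exactly what is unavailable in general.
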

\begin{proof}
We first prove \ref{cond:spe_avg} $\Rightarrow$ \ref{ineq:gt_minmax_spe}. Assume
that \ref{cond:spe_avg} holds. If $\vert A_1 \vert = 1$, then there is a single strategy for \playerone{},
namely the one always playing the single action in $A_1$; clearly, this strategy is computable,
whence \ref{cond:spe_avg} could not hold, contradicting the assumption, and we thus conclude that $\vert A \vert \geq 2$; furthermore,
a strategy profile that is a subgame-perfect equilibrium is also a Nash equilibrium, and by Lemma \ref{lem:comp_avg_ne} we thus conclude that \ref{ineq:gt_minmax_spe} holds. 

The proof of \ref{ineq:gt_minmax_spe} $\Rightarrow$ \ref{cond:spe_avg} proceeds in
 the same fashion as the proof of Lemma \ref{lem:comp_avg_ne}, with some modifications
 to accomodate subgame-perfect equilibria.
Assume that \ref{ineq:gt_minmax_spe} holds, and let $s'$ be a Nash equilibrium of $G^\infty$ such that $ \lompayoffname_1(s') > \min\limits_{a_{2} \in A_{2}}\max\limits_{a_{1} \in A_{1}} u_1(a_1, a_2)$. By Theorem \ref{theorem:folk_avg_spe}, there is a subgame-perfect equilibrium $s = (s_1, s_2)$ where both $s_1$ and $s_2$ are computable and $\lompayoffname(s) = \lompayoffname(s')$. We will modify $s$ as in the proof of Lemma \ref{lem:comp_avg_ne}, except that we allow only \emph{finite} punishments (to ensure a subgame-perfect equilibrium). If a deviation in a test stage is detected in stage $T_D$, then \playertwo{} plays the minmax against \playerone{} for the next $T^2_D - T_D$ stages (we will colloquially call this a \emph{punishment phase} of the game). 

Fix any $C_1, D_1 \in A_1$ such that $C_1 \neq D_1$, and let $D_2 \in A_2$ be a minmax action against \playerone{}, that is, $M_1(D_2) = u_1(D_2) =\min\limits_{a_{2} \in A_{2}}\max\limits_{a_{1} \in A_{1}} u_1(a_1, a_2)$; observe that $M_1(D_2) < \lompayoffname_1(s)$. For $n \in \insepsetA \cup \insepsetB$, define $\textrm{detect}(n)$ to be the least $m \in \mathbb{N}$ such that $n \in \insepsetA_m \cup \insepsetB_m$. For a finite history $h^T \in \mathcal{H}^T_{G^\infty},$ define $\textrm{fix}(h^T) \in \mathcal{H}^T_{G^\infty}$ by $\textrm{fix}(h^T)[t] = h^{(s)}[t]$ if $t=2^i$ for some $i \in \mathbb{N}$, and $\textrm{fix}(h^T)[t] = h^T[t]$ otherwise. Define $\bar{s}_2$ to be the strategy for \playertwo{} that, given a finite history $h^T \in \mathcal{H}^T_{G^\infty}$, plays the following action in stage $T+1$:
\begin{enumerate}
    \item \label{cond:fui} If for any $t$ satisfying $0 < 2^t \leq T$ and $\textrm{detect}(t) > \sqrt{T}$, either
    \begin{align*}
        h^T_1[2^t] \neq C_1\ \&\ (t \in \insepsetA_T),
    \end{align*} or 
    \begin{align*}
        h^T_1[2^t] \neq D_1\ \&\ (t \in \insepsetB_T),
    \end{align*}
    play $D_2$. 
    \item Otherwise, play $s_2(\textrm{fix}(h^T))$.
\end{enumerate}
Define $\bar{s}_1$ to be Player 1's strategy that, given a~finite history $h^T \in \mathcal{H}^T_{G^\infty},$ plays the following action in stage $T+1$: 
\begin{enumerate}
    \item If $T+1=2^t$ for some $t \in \insepsetB$, play $D_1$.
    \item if $T+1=2^t$ for some $t \in \insepsetA$, play $C_1$.
    \item Otherwise, play $s_1(\textrm{fix}(h^T))$.
\end{enumerate}

We claim that $\bar{s} = (\bar{s}_1, \bar{s}_2)$ is a subgame-perfect equilibrium,
that $\bar{s}_2$ is computable, and that no best response to $\bar{s}_2$ is a computable strategy.

We first prove that $\bar{s}_2$ is a computable strategy: As $s_2$ is computable, a Turing machine that computes
it can be used as a subroutine by a Turing machine $\textrm{TM}$ that, by Remark \ref{rem:setsABthenAnBn}, on input $T$ can generate the G{\"o}del numbers of
Turing machines deciding $\insepsetA_T$ and $\insepsetB_T$, and subsequently simulate these on inputs on the form $2^t$ using a universal Turing machine as a subroutine. Furthermore, even though the function $\textrm{detect}$ is not directly computable, it is decidable whether
$\textrm{detect}(t) > \sqrt{T}$ as it suffices to generate all elements of the
sets $\insepsetA_i$ and $\insepsetB_i$ with $i \in \{1,\ldots,\lfloor \sqrt{T} \rfloor\}$,
which can be done by Remark \ref{rem:setsABthenAnBn}.

We proceed to prove that $\bar{s}$ is a subgame-perfect equilibrium. Observe that, by the definition of $\bar{s}$, $\lompayoffname_1(\bar{s}) = \lompayoffname_1(s) > M_1(D_2)$ and $\lompayoffname_2(\bar{s}) = \lompayoffname_2(s).$  Consider a finite history $h^T \in \mathcal{H}_{G^\infty}^T$ and a subgame $(G^\infty, h^T)$. If both players play $\bar{s}$ in $(G^\infty, h^T)$, they obtain the payoff profile $\lompayoffname(\bar{s})$, as any punishment phase of a deviation in $h^T$ lasts at most until stage 
$$
(\max\{\textrm{detect}(1), \textrm{detect}(2),\ldots, \textrm{detect} (\lfloor \log_2 T \rfloor)\})^2
$$ 
 Because $s$ is a subgame-perfect equilibrium, any unilateral deviation by \playertwo{} cannot result
 in strictly greater payoff for \playertwo{}. If \playerone{} deviates for only a finite number of stages, let $T_X$ be the stage of the last such a deviation. Starting from stage 
$$
1 + (\max\{\textrm{detect}(1),\textrm{detect}(2),\ldots,\textrm{detect}(\lfloor\log_2(T_X)\rfloor)\})^2 
$$ 
\noindent the path of play returns to the path of play determined by $\bar{s}_1$ and $\bar{s}_2$, leading to the limit-of-means payoff $\lompayoffname(\bar{s})$ again. Now, if \playerone{} deviates in infinitely many stages, let $T_0, T_1, T_2, \dots$ be the sequence of stages when \playerone{}'s deviation is first detected by \playertwo{} (i.e., when case (\ref{cond:fui}) in the definition of $\bar{s}_2$ applies after a period of playing $s_1(\textrm{fix}(h^T))$). 
The payoff at the end of the $n$th punishment phase
(that is, in stage $T_n + (T^2_n - T_n) = T^2_n)$ is
at most
$$
\frac{1}{T^2_n}\left(T_n \max_{a \in A}\{ u_1(a)\} + (T^2_n - T_n) M_1(D_1) \right)
\leq M_1(D_1) + O(1/n) 
$$
As the payoff of the infinitely repeated game
is the limit inferior of the payoffs after finitely many stages, \playerone{}'s payoff when performing infinitely many deviations is thus in particular at most
$$
\liminf_{n\rightarrow\infty}{\left(\min\limits_{a_{2} \in A_{2}}\max\limits_{a_{1} \in A_{1}} u_1(a_1, a_2) + O(1/n)\right)}
\leq M_1(D_1) \leq \lompayoffname_1(\bar{s})
$$
whence no deviation of \playerone{} can result in strictly greater payoff than
$\lompayoffname_1(\bar{s})$. Moreover, since $M_1(D_1) < \lompayoffname_1(\bar{s})$, any best response of \playerone{} can deviate only finitely many times. As no deviation of any player following any finite history would strictly increase their payoff, $\bar{s}$ is a subgame-perfect equilibrium.

We now prove that any best response to $\bar{s}_2$ is not computable.
%
Assume, for contradiction, that $\bar{s}_2$ has a computable best response $\bar{s}'_1$. By the previous argument, $\bar{s}'_1$ deviates from the prescribed path only finitely many times, and thus \emph{a fortiori} it deviates in test stages only finitely many times. 
Let $T_X=2^{t_x}$ be the last test stage where $\bar{s}'_1$ deviates. As $\bar{s}_2$ is computable, let $\textrm{TM}_2$ be a Turing machine computing $\bar{s}_2,$ and let $\textrm{TM}_1'$ be a Turing machine computing $\bar{s}_1'$. Let $\textrm{TM}_g$ be a Turing machine that, on input $t \in \mathbb{N}$ does the following:

\begin{itemize}

\item if $t \leq t_X$, $\textrm{TM}_g$ looks up in an array of length $t_X$ whether
$t \in \insepsetA$, and accepts if it is, and rejects otherwise.

\item If $t>t_X$, $\textrm{TM}_g$ first simulates both $\textrm{TM}_1'$ and $\textrm{TM}_2$ $2^t$ times to obtain the action profile $h^{2^t}[2^t]$ induced by $(\bar{s}_1', \bar{s}_2)$. Then, $\textrm{TM}_g$ accepts if $h_1^{2^t}[2^t]=C_1$, and rejects otherwise. 

\end{itemize}
Now, by construction, $\textrm{TM}_g$ halts on all inputs and decides a language $C$
such that $\insepsetA \subseteq C$ and $C \cap \insepsetB = \emptyset$,
contradicting the fact that $\insepsetA$ and $\insepsetB$ are recursively inseparable by Proposition \ref{prop:sets_AB}.

Thus, $\bar{s}$ satisfies all conditions $\ref{cond:characterization_spe_avg_1}$ - $\ref{cond:characterization_spe_avg_3}$, and thus \ref{ineq:gt_minmax_spe} holds, as desired.
\end{proof}

\subsection{A complete characterization of Nash and subgame-perfect equilibria}

We can now finally give a complete characterization of games
where a strategy for \playertwo{}--required to be part of a strategy profile that is either a Nash equilibrium or subgame-perfect equilibrium--has no computable best response:

\begin{theorem}\label{theorem:final_charac_lom}
Let $G$ be a 2-player normal-form game. The following are equivalent:

\begin{enumerate}[\textbf{(\alph*)}]
        \item \label{cond:le_foo} There is a strategy profile $s=(s_1,s_2)$ in $G^\infty$ satisfying
        \begin{enumerate}[(1)]
        \item \label{cond:characterization_ne_avg_1} $s$ is a Nash equilibrium of $G^\infty$,
        \item \label{cond:characterization_ne_avg_2} $s_2$ is computable,
        \item \label{cond:characterization_ne_avg_3} $s_2$ does not have a computable best response.
        \end{enumerate}
    
    \item There is a strategy profile $s=(s_1,s_2)$ in $G^\infty$ satisfying
        \begin{enumerate}[(1)]
        \item \label{cond:characterization_spe_avg_1} $s$ is a subgame-perfect equilibrium of $G^\infty$,
        \item \label{cond:characterization_spe_avg_2} $s_2$ is computable,
        \item \label{cond:characterization_spe_avg_3} $s_2$ does not have a computable best response.
        \end{enumerate}
    \item \label{cond:le_last} $\vert A_1 \vert \geq 2,$ and there is a Nash equilibrium $s'$ of $G^\infty$ that is strictly individually rational for \playerone{}, that is, $s'$ satisfies:
    \begin{align}
        \lompayoffname_1(s') > \min\limits_{a_{2} \in A_{2}}\max\limits_{a_{1} \in A_{1}} u_1(a_1, a_2).
    \end{align}
    \end{enumerate}
\end{theorem}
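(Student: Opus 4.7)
The plan is direct: the theorem is essentially an immediate corollary of the two preceding lemmas. Specifically, Lemma \ref{lem:comp_avg_ne} already establishes the equivalence between condition (a) of the theorem (existence of a Nash equilibrium $s=(s_1,s_2)$ with $s_2$ computable but without a computable best response) and condition (c) (the strict individual rationality condition for \playerone{} combined with $|A_1|\geq 2$). Analogously, Lemma \ref{lem:comp_avg_spe} establishes the equivalence between condition (b) (the subgame-perfect equilibrium analogue of (a)) and exactly the same condition (c).

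Hence I would prove the theorem simply by invoking transitivity of the biconditional: (a) $\iff$ (c) by Lemma \ref{lem:comp_avg_ne}, and (b) $\iff$ (c) by Lemma \ref{lem:comp_avg_spe}, so (a) $\iff$ (b) $\iff$ (c). No new combinatorial, game-theoretic, or computability argument is needed at this stage; the work was already carried out in the two lemmas.

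The one conceptual point worth flagging in the writeup (rather than a technical obstacle) is \emph{why} the same condition (c) appears on the right-hand side of both lemmas, since a priori one might expect that requiring subgame-perfection, being a strictly stronger equilibrium concept than Nash, would demand a strictly stronger necessary and sufficient condition. The explanation is that Theorem \ref{theorem:folk_avg_spe} (the Folk Theorem for subgame-perfect equilibria under limit-of-means payoff) guarantees that every feasible and individually rational payoff profile can be achieved by a subgame-perfect equilibrium with computable strategies, so the payoff profiles attainable by Nash equilibria and by subgame-perfect equilibria coincide. This is what lets the modifications in the proofs of both lemmas (adding the test stages indexed by powers of $2$ and, in the SPE case, using finite punishment phases of length $T_D^2 - T_D$) start from the same base strategy profile guaranteed by condition (c).

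Thus my proof obligation collapses to simply noting that the theorem's three conditions are exactly the two pairs of conditions proved equivalent in Lemmas \ref{lem:comp_avg_ne} and \ref{lem:comp_avg_spe}, and concluding by transitivity. The main ``obstacle,'' insofar as there is one, has already been resolved: it was the construction of the strategy $\bar{s}_2$ in Lemma \ref{lem:comp_avg_spe} that handles finite punishments without destroying the recursive-inseparability argument, which is unchanged from what is already in the excerpt.
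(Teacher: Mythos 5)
Your proposal is correct and matches the paper's proof exactly: the theorem is deduced directly from Lemmas \ref{lem:comp_avg_ne} and \ref{lem:comp_avg_spe} by transitivity of the equivalences through condition (c). Your added remark on why subgame-perfection imposes no stronger condition (via Theorem \ref{theorem:folk_avg_spe}) is accurate and consistent with the paper's own discussion preceding Lemma \ref{lem:comp_avg_spe}.
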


\begin{proof}
The result follows directly from Lemmas \ref{lem:comp_avg_ne} and \ref{lem:comp_avg_spe}.
\end{proof}

Condition \ref{cond:le_last} of Theorem \ref{theorem:final_charac_lom} might at the first glance seem difficult to check, but the Folk theorem provides us with an efficient algorithm for deciding whether there is a Nash equilibrium of $G^\infty$ that is strictly individually rational for \playerone{}. 
By Remark \ref{rem:sheesh_cats}, every payoff profile of $G^\infty$ is in the convex hull $\mathcal{C}$ of the payoff profiles 
of $G$, and by Theorem \ref{theorem:folk_avg_ne}, every Nash equilibrium of $G^\infty$
is individually rational, whence the set of Nash equilibria of $G^\infty$ is a subset of the intersection
of the convex hull $\mathcal{C}$ and the set 
$$
\mathcal{Q} = \{(x,y) \in \mathbb{R}^2 : x \geq \min\limits_{a_{2} \in A_{2}}\max\limits_{a_{1} \in A_{1}} u_1(a_1, a_2), y \geq \min\limits_{a_{1} \in A_{1}}\max\limits_{a_{2} \in A_{2}} u_2(a_1, a_2)\}
$$
But also by Theorem \ref{theorem:folk_avg_ne}, for every feasible and individually rational payoff profile $\upsilon$, there is a Nash equilibrium $s$ of $G^\infty$ with payoff profile $\upsilon$. Hence,
every payoff profile $\upsilon \in \mathbb{Q}^2$ in $\mathcal{C} \cap \mathcal{Q}$ is a Nash equilibrium.

To verify condition \ref{cond:le_last}, it thus suffices to consider the various cases of $\mathcal{C} \cap \mathcal{Q}$.
If  $\mathcal{C} \cap \mathcal{Q} = \{(p,q)\}$, it is one of the corner points of $\mathcal{C}$ or $\mathcal{Q}$, hence either the minmax profile or one of the payoff profiles of $G$, and hence has rational components; thus, $(p,q)$ is a Nash equilibrium,
and we can check directly if $p > \min\limits_{a_{2} \in A_{2}}\max\limits_{a_{1} \in A_{1}} u_1(a_1, a_2)$.
If $\mathcal{C} \cap \mathcal{Q}$ is a line segment $L$, the fact that $\mathcal{Q}$ is an upper-right quarter-plane (hence have edges parallel to the $x$- and $y$-axes) entail that $L$ is either a subset of $Z_1 = \{(x,y) : x \geq \min\limits_{a_{2} \in A_{2}}\max\limits_{a_{1} \in A_{1}} u_1(a_1, a_2)\}$, or of
$Z_2 = \{(x,y) : y \geq \min\limits_{a_{1} \in A_{1}}\max\limits_{a_{2} \in A_{2}} u_2(a_1, a_2)\}$; if $L$ is a subset of $Z_1$ (clearly checkable by testing a single point), there are no Nash equilibria
$s$ with $\lompayoffname_1(s) > \min\limits_{a_{2} \in A_{2}}\max\limits_{a_{1} \in A_{1}} u_1(a_1, a_2)$,
and if $L$ is a subset of $Z_2$, density
of $\mathbb{Q}^2$ in $\mathbb{R}^2$, entails that $L$ contains a point with rational components, which then by Theorem \ref{theorem:folk_avg_ne} corresponds to a Nash equilibrium $s$ with 
$\lompayoffname_1(s) >  \min\limits_{a_{2} \in A_{2}}\max\limits_{a_{1} \in A_{1}} u_1(a_1, a_2)$,
whence \ref{cond:le_last} holds. Finally, if $\mathcal{C} \cap \mathcal{Q}$ is neither a singleton, nor a line segment, it is itself a convex polygon, and by density of $\mathbb{Q}^2$ in $\mathbb{R}^2$ is contains
a point $(p,q) \in \mathbb{Q}^2$ with $p > \min\limits_{a_{2} \in A_{2}}\max\limits_{a_{1} \in A_{1}} u_1(a_1, a_2)$ that is a Nash equilibrium by Theorem \ref{theorem:folk_avg_ne}.

Computing $\mathcal{C} \cap \mathcal{Q}$ can be performed by first restricting $\mathcal{Q}$ to a sufficiently large rectangle, for example the rectangle with lower-left corner
$$
\left(\min\limits_{a_{2} \in A_{2}}\max\limits_{a_{1} \in A_{1}} u_1(a_1, a_2),\min\limits_{a_{1} \in A_{1}}\max\limits_{a_{2} \in A_{2}} u_2(a_1, a_2)\right)
$$ 
and upper-right corner
\begin{align*}
\left(\max\limits_{a \in A} u_1(a), \max\limits_{a \in A} u_2(a)\right)
\end{align*}
and subsequently using a standard algorithm for computing the intersection of convex polygons.

Using, e.g., Chan's algorithm \cite{Chan1996} for finding a convex hull of a set of $n$ points runs in  $\mathcal{O}(n\log{h})$ time, where $h \leq n$ denotes the number of points in the convex hull,
and by any number of classical algorithms, e.g. \cite{Shamos1975}, the intersection of two convex polygons of size at most $n$ can be computed in $\mathcal{O}(n)$ time. 

Hence, for a payoff matrix of dimension $n \times m$, deciding whether condition \ref{cond:le_last} holds can be done in $\mathcal{O}(nm \log{nm})$ time using the method described above.

\section{Some examples of applying the results}

To illustrate our results, we give examples of well-known games that satisfy different criteria in Theorems \ref{the:LOM_naive_the} and \ref{theorem:final_charac_lom}.

\begin{example}[Rock-paper-scissors]
\label{ex:rockpaperscissors}
Rock-paper-scissors is a two-player game with $A_1 = A_2 = \{ \textnormal{Rock, Paper, Scissors}\}$ and payoff matrix as follows:
\begin{center}
\begin{tabular}{cccc}
                              & Rock                       & Paper                      & Scissors                   \\ \cline{2-4} 
\multicolumn{1}{c|}{Rock}     & \multicolumn{1}{c|}{0, 0}  & \multicolumn{1}{c|}{-1, 1} & \multicolumn{1}{c|}{1, -1} \\ \cline{2-4} 
\multicolumn{1}{c|}{Paper}    & \multicolumn{1}{c|}{1, -1} & \multicolumn{1}{c|}{0, 0}  & \multicolumn{1}{c|}{-1, 1} \\ \cline{2-4} 
\multicolumn{1}{c|}{Scissors} & \multicolumn{1}{c|}{-1, 1} & \multicolumn{1}{c|}{1, -1} & \multicolumn{1}{c|}{0, 0}  \\ \cline{2-4} 
\end{tabular}
\end{center}

The minmax payoff profile of Rock-paper-scissors is $(1,1),$ while $1$ is also the maximum payoff that \playerone{} can obtain. Hence, Rock-paper-scissors is trivial for \playerone{} and by Theorem
\ref{the:LOM_naive_the}, every computable strategy of \playertwo{} has a computable best response.
\end{example}

\begin{example}[Deadlock]
\label{ex:deadlock} Deadlock is a two-player game $G$ with $A_1 = A_2 = \{ C, D\}$ and payoff matrix as follows:
\begin{center}
\begin{tabular}{ccc}
\textit{}                       & \textit{C}                         & \textit{D}                         \\ \cline{2-3} 
\multicolumn{1}{c|}{\textit{C}} & \multicolumn{1}{c|}{\textit{1, 1}} & \multicolumn{1}{c|}{\textit{0, 3}} \\ \cline{2-3} 
\multicolumn{1}{c|}{\textit{D}} & \multicolumn{1}{c|}{\textit{3, 0}} & \multicolumn{1}{c|}{\textit{2, 2}} \\ \cline{2-3} 
\end{tabular}
\end{center}

The minmax payoff profile of Deadlock is $(2,2)$ but \playerone{} cannot obtain higher payoff than $2$ under a Nash equilibrium. By Theorem \ref{the:LOM_naive_the}, there is a computable strategy of \playertwo{} without a computable best response, but by Theorem \ref{theorem:final_charac_lom}, no such strategy is part of any strategy profile that is a Nash equilibrium.
\end{example}

\begin{example}[Stag hunt]
\label{ex:staghunt} Stag hunt is a two-player game $G$ with $A_1 = A_2 = \{ \textnormal{Stag, Hare}\}$ and the following payoff matrix:
\begin{center}
\begin{tabular}{ccc}
\textit{}                       & \textit{Stag}                         & \textit{Hare}                         \\ \cline{2-3} 
\multicolumn{1}{c|}{\textit{Stag}} & \multicolumn{1}{c|}{\textit{3, 3}} & \multicolumn{1}{c|}{\textit{0, 2}} \\ \cline{2-3} 
\multicolumn{1}{c|}{\textit{Hare}} & \multicolumn{1}{c|}{\textit{2, 0}} & \multicolumn{1}{c|}{\textit{1, 1}} \\ \cline{2-3} 
\end{tabular}
\end{center}

The minmax payoff profile of Stag hunt is $(1,1)$, and the repeated play of $(\textnormal{Stag, Stag})$ is a Nash equilibrium of $G^\infty$ with \playerone{}'s payoff being $3.$ By Theorem \ref{theorem:final_charac_lom}, there is a subgame-perfect equilibrium $s=(s_1,s_2)$
(hence also a Nash equilibrium) of $G^\infty$ such that $s_2$ is a computable strategy that does not have computable best response. 
\end{example}

\section{Conclusion and future work}

While we have provided a complete characterization of two-player games
with computable strategies without computable best responses in the case of limit-of-means payoff, there are other standard ways of 
defining the payoff--notably the discounted payoff
where sufficient conditions (not known to be necessary) exist \cite{bib:Nachbar1996}--and complete characterizations must be established for these as well. For the case of \emph{subrecursive} computation where strategies are computed by machines with strictly less extensional power than Turing machines,
some results are known, notably for time and space complexity classes (see, e.g.\ \cite{DBLP:conf/stoc/FortnowW94,bib:Chen:2015}), but it would be interesting to have a general result holding for all suitably well-behaved classes of (subrecursive) functions (e.g., classes axiomatizable as in \cite{Kozen1980}). In addition, repeated games are just a special case of sequential games that are usually represented in extensive form. Is it possible to apply the techniques used for infinitely repeated games to infinite extensive-form games?  

As similar computability problems can be investigated for games with imperfect information, or cooperative games, it would be interesting to derive complete characterizations of games with computable strategies without best responses in those settings; likewise, it would be interesting to investigate whether our results carry over to a setting where mixed strategies--as opposed to the pure strategies studied in this paper--are allowed.
Finally, all of the above can be investigated for games with more than two players, but this is likely to produce characterizations that are substantially harder to check than those in the present paper; for example, Nash and subgame-perfect equilibria are significantly harder to compute
for repeated games with more than two players under discounted payoff \cite{BORGS201034,DBLP:journals/geb/HalpernPS19}

\clearpage

%
%
%
%
%

\bibliographystyle{ACM-Reference-Format}
\bibliography{bibliography_Dargaj_Simonsen.bib}

\clearpage

\appendix
\section{Material omitted from the main text}

A full proof of Proposition \ref{prop:sets_AB} can be found in several publications (e.g., \cite{bib:Nachbar1996}). For completeness, we also give a full proof below using the notation of the present paper.

\begin{proof}[Proof of Proposition \ref{prop:sets_AB}]
Both $\insepsetA$ and $\insepsetB$ are clearly
recursively enumerable, and hence so is $\insepsetA \cup \insepsetB$. 

To prove that $A$ and $B$ are recursively inseparable, define $g: \mathbb{N} \to \{0, 1\}$ to be any (possibly partial) function satisfying:
\begin{enumerate}
    \item $g(n) = 1 \mbox{ if } n \in \insepsetA$,
    \item $g(n) = 0 \mbox{ if } n \in \insepsetB$.
\end{enumerate}
We claim that $g$ is not computable. Suppose, for contradiction, that $g$ were computable; then, let $k$ be the G{\"o}del number of a Turing machine 
such that $\phi_k = g$. Because $\halts{\phi_k(n)}$ for all 
$n \in \mathbb{N}$, we have $\halts{\phi_k(k)}$, and hence either $k \in \insepsetA$ or $k \in \insepsetB$. If $k \in \insepsetA$, we have $\phi_k(k) = 0$, but by definition we have $g(k) = 1$,
a contradiction. Otherwise, we have $k \in \insepsetB$,
and thus $\phi_k(k) \neq 0$;  but $g(k) = 0$, and we once again obtain a contradiction. Hence, $g$ is not computable. 
But if there were a decidable set $C$ such that $A \subseteq C$
and $B \cap C = \emptyset$, we can construct a Turing machine
with some G{\"o}del number $l$ such that 
$\phi_l(n) = 1$ if{f} $n \in C$ and $\phi_l(n) = 0$ otherwise.
But as $\insepsetA \cap \insepsetB = \emptyset$, 
$\phi_l$ then satisfies that $\phi_l(n) = 1$ of $n \in \insepsetA$ and $\phi_l(n) = 0$ if $n \in \insepsetB$
which contradicts the above observation that no such function is computable.
Consequently, $\insepsetA$ and $\insepsetB$ are recursively inseparable.

Observe that if $\insepsetA$ were decidable, then the
fact that $\insepsetA \cap \insepsetB = \emptyset$ implies that
$\insepsetA$ is a decidable set separating $\insepsetA$ and
$\insepsetB$, contradicting recursive inseparability of the two sets. The proof of undecidability of $\insepsetB$ is symmetric.
Finally, note that $\insepsetA \cup \insepsetB = \{n \in \mathbb{N} : \halts{\phi_n(n)}\} = \emptyset'$, and hence
$\insepsetA \cup \insepsetB$ is undecidable.
\end{proof}

Below is a full proof--adapted from a proof by Aumann \cite{bib:aumann1981} but using the notation from the present paper--of Theorem \ref{theorem:folk_avg_ne}:

\begin{proof}[Proof of Theorem \ref{theorem:folk_avg_ne}]
(1) Assume, for contradiction, that there exists a Nash equilibrium $s=(s_1, \dots, s_n)$ of $G^\infty$ with such that $\upsilon_i(s) = \upsilon_i$ for all $i \in N$, but that there is some $i \in N$ 
$$\upsilon_i < \min\limits_{a_{-i} \in A_{-i}}\max\limits_{a_{i} \in A_{i}} u_i(a_i, a_{-i}).$$ Consider a strategy $s_i'$ for \playeri{} that in stage $t \in \mathbb{N}$ plays a best response to the action profile $a_{-i}'$ played by the other players. \playeri{}'s payoff in every stage equals $\max\limits_{a_{i} \in A_{i}} u_i(a_i, a_{-i}')$, and hence this is also the limit-of-means payoff of \playeri{} in $G^\infty$. Because $\max\limits_{a_{i} \in A_{i}} u_i(a_i, a_{-i}') \geq \min\limits_{a_{-i} \in A_{-i}}\max\limits_{a_{i} \in A_{i}} u_i(a_i, a_{-i}) > \upsilon_i$, $s_i$ is not a best response to $s_{-i}$. This contradicts $s$ being a Nash equilibrium.

(2) We will construct a Nash equilibrium $s$ so that, for all $i \in N$, $\lompayoffname(s) = \upsilon_i$. Because $\upsilon$ is feasible, by definition we have, for all $i \in N$, that $\upsilon_i = \sum_{a \in A}{\alpha_a u_i(a)}$ for some $\alpha \in \mathbb{Q}^{|A|}$ with all components non-negative. Each $\alpha_a$ is rational, so we can rewrite it as $\alpha_a=\frac{\beta_a}{\gamma}$ for non-negative integers $\beta_a, \gamma$ satisfying $\sum_{a \in A}{\beta_a} = \gamma.$ 

The payoff vector $\upsilon$ is achieved by splitting $G^\infty$ into periods of $\gamma$ stages. 
Let $s$ be a strategy profile consisting of strategies that, in each period of length $\gamma$, play the action profile $a$ for $\beta_a$ stages for each $a \in A$ in some fixed ordering of $A$, in case no player has deviated. If \playerj{} unilaterally deviates from the prescribed path of play, all other players play $a_{-j}'$ forever, where $a_{-j}'$ is a minmax action against \playerj{}, that is, an action profile satisfying $\max\limits_{a_{j} \in A_{j}} u_j(a_j, a_{-j}') = \min\limits_{a_{-j} \in A_{-j}} \max\limits_{a_{j} \in A_{j}} u_j(a_j, a_{-j})$. 
From the assumption that $\upsilon$ is individually rational we have $\upsilon_j \geq \max\limits_{a_{j} \in A_{j}} u_j(a_j, a_{-j}')$. But $\max\limits_{a_{j} \in A_{j}} u_j(a_j, a_{-j}')$ is an upper bound on the limit-of-means payoff that \playerj{} can obtain if they deviate, and hence any deviation, by any player,
from $s$ cannot yield strictly greater payoff for that player, whence $s$ is a Nash equilibrium. By construction, the limit-of-means payoff of any player \playeri{} on the prescribed path of play is exactly $\upsilon_i$, and hence $s$ leads to the payoff profile $\upsilon.$ Moreover, every strategy in $s$ is computable, as it suffices to iterate over a table of length $\gamma$ to play the next action and compare to the finite history of previous actions played by the other players. 
\end{proof}

Theorem \ref{theorem:folk_avg_spe} was originally proved by Aumann and Shapley \cite{Aumann1994LongTermC}, but without
making computability of the subgame-perfect equilibria explicit in the statement of the result. Below is a proof, using an adaptation of their methods, using the notation and general approach of the present paper:

\begin{proof}[Proof of Theorem \ref{theorem:folk_avg_spe}]
 We modify the proof of Theorem \ref{theorem:folk_avg_ne} such that the strategy profile $s$ is a Nash equilibrium following any finite history. As in that proof, write $\upsilon_i = \sum_{a \in A}{\frac{\beta_a}{\gamma}u_i(a)}$ for each $i \in N$, and define each strategy in $s$ to play the action profile $a$ for $\beta_a$ stages for each $a \in A$ in periods of length $\gamma$ if no player has deviated from $s$.  Define, for each $j \in N$, $\mu_j = \min\limits_{a_{-j} \in A_{-j}} \max\limits_{a_{j} \in A_{j}} u_j(a_j, a_{-j})$.
 Observe that $\mu_j \leq \upsilon_j \leq \max_{a \in A} u_j(a)$.
 If any player \playerj{} deviates at some stage $T$,
 the other players play a minmax action profile against \playerj{}
 (that is, a profile with payoff $\mu_j$ for \playerj{}) for
 $T^2 - T$ stages (we call this a \emph{punishment phase}),
 and then revert to the strategy $s$.

Consider a finite history $h^T \in \mathcal{H}^T_{G^\infty}$ and a subgame $(G^\infty, h^T)$. If all players play $s$ in $(G^\infty, h^T)$, they obtain the payoff profile $\upsilon$ because any deviation in $h^T$ is punished in at most $T^2 - T$ stages following $h^T$. Consider any path of play; if \playerj{} deviates from the prescribed path a finite number of times, let $T_X$ be the stage of their last deviation. After the following $T^2_X - T_X$ stages, the last punishment phase ends and the prescribed path of play continues forever, leading to the payoff profile $\upsilon$, whence \playerj{}'s deviation does not yield strictly greater payoff for \playerj{}. If \playerj{} deviates from the prescribed path infinitely many times, let $T_0, T_1, T_2, \ldots$ be the infinite sequence of stages where \playerj{} deviates. 
 Then, for each $n \in \mathbb{N}$, the payoff in the first $T_n + (T^2_n - T_n) = T^2_n$
stages (that is, at the end of the $n$th punishment phase) is at most:
$$
\frac{1}{T_n + (T^2_n - T_n)}\left(T_n \max_{a \in A}\{u_j(a)\} + (T^2_n - T_n) \mu_j\right)
= \frac{(T^2_n - T_n) \mu_j}{T^2_n} + \frac{T_n \max_{a \in A}\{u_j(a)\}}{T^2_n} \leq \mu_j + O(1/n)
$$
As the payoff of \playerj{} is the limit inferior
of the payoffs after the finite repetitions, the payoff
for \playerj{} when deviating infinitely many times
is at most $\mu_j \leq \upsilon_j$ and hence \playerj{} does not strictly increase their payoff compared to $s$. Hence, as the finite history
$h^T$ was arbitrary, $s$ is a subgame-perfect equilibrium.
%

\end{proof}

\end{document}